\numberwithin{equation}{section}
\newtheorem{theorem}{Theorem}[section]
\newtheorem{lemma}[theorem]{Lemma}
\newtheorem{corollary}[theorem]{Corollary}
\newtheorem{definition}[theorem]{Definition}
\newtheorem{claim}[theorem]{Claim}
\DeclareMathOperator{\poly}{poly}
\DeclareMathOperator{\polylog}{polylog}
\newcommand{\N}{\mathbb N} \newcommand{\R}{\mathbb R}
\newcommand{\Z}{\mathbb Z} 
\newcommand{\F}{\mathbb F} 
\newcommand{\B}{\{ 0,1 \}} \newcommand{\BM}{\{ -1,1 \}}
\renewcommand{\H}{\mathbb H}
 \newcommand{\AM}{{\mathcal {AM}}}
\newcommand{\MA}{{\mathcal {MA}}}
\newcommand{\GHD}{{\mathsf {GHD}}} \newcommand{\ORT}{{\mathsf {ORT}}}
\newcommand{\ECC}{{\mathsf {ECC}}}
\begin{document}
\title{Arthur-Merlin Streaming Complexity} \author{Tom Gur \thanks{
    Department of Computer Science and Applied Mathematics, Weizmann
    Institute of Science, Rehovot 76100, Israel. E-mail: {\tt
      \{tom.gur, ran.raz\}@weizmann.ac.il}. Research supported by an
    Israel Science Foundation grant and by the I-CORE Program of the
    Planning and Budgeting Committee and the Israel Science
    Foundation.}  \and Ran Raz $^*$} \date{\today}
\maketitle
\begin{abstract}
  We study the power of Arthur-Merlin probabilistic proof systems in
  the data stream model. We show a canonical $\AM$ streaming algorithm
  for a wide class of data stream problems. The algorithm offers a
  tradeoff between the length of the proof and the space complexity
  that is needed to verify it.

  As an application, we give an $\AM$ streaming algorithm for the
  \emph{Distinct Elements} problem. Given a data stream of length $m$
  over alphabet of size $n$, the algorithm uses $\tilde O(s)$ space
  and a proof of size $\tilde O(w)$, for every $s,w$ such that $s
  \cdot w \ge n$ (where $\tilde O$ hides a $\polylog(m,n)$ factor). We
  also prove a lower bound, showing that every $\MA$ streaming
  algorithm for the \emph{Distinct Elements} problem that uses $s$
  bits of space and a proof of size $w$, satisfies $s \cdot w =
  \Omega(n)$.
    
  As a part of the proof of the lower bound for the \emph{Distinct
    Elements} problem, we show a new lower bound of $\Omega \left(
    \sqrt n \right )$ on the $\MA$ communication complexity of the
  \emph{Gap Hamming Distance} problem, and prove its tightness.
\end{abstract}
\paragraph*{\bf Keywords:} {\small Probabilistic Proof Systems, Data
  Streams, Communication Complexity.}

\section{Introduction}
The data stream computational model is an abstraction commonly used
for algorithms that process network traffic using sublinear space
\cite{AMS96,IW03,CCM09}. In the settings of this model, we have an
algorithm that gets a sequence of elements (typically, each element is
an integer) as input. This sequence of elements is called a \emph{data
  stream} and is usually denoted by $\sigma = (a_1, \ldots, a_m)$;
where $a_1$ is the first element, $a_2$ is the second element, and so
forth. The algorithm receives its input (a data stream)
element-by-element. After it sees each $a_i$, it no longer has an
access to elements with index that is smaller than $i$. The algorithm
is required to compute a function of the data stream, using as little
space as possible.

Among the most fundamental problems in the data stream model is the
problem of \emph{Distinct Elements}, i.e., the problem of computing
the number of distinct elements in a given data stream. The problem
has been studied extensively in the last two decades (see, for
example, \cite{AMS96,IW03,KNW10}). Its significance stems both from
the vast variety of applications that it spans (covering IP routing,
database operations and text compression, cf. \cite{M05,AMS96,GGK05}),
and due to the theoretical insight that it gives on the nature of
computation in the data stream model.

Alon at el. \cite{AMS96} have shown a lower bound of $\Omega(n)$
(where $n$ is the size of the alphabet from which the elements are
taken) on the streaming complexity of the computation of the
\emph{exact} number of distinct elements in a sufficiently long data
stream (i.e., where the length of the data stream is at least
proportional to $n$). The goal of reducing the space complexity of the
\emph{Distinct Elements} problem has led to a long line of research of
approximation algorithms for the problem, starting with the seminal
paper \cite{FM83} by Flajolet and Martin.  Recently, Kane at
el. \cite{KNW10} gave the first optimal approximation algorithm for
estimating the number of distinct elements in a data stream; for a
data stream with alphabet of size $n$, given $\epsilon > 0$ their
algorithm computes a $(1 \pm \epsilon)$ multiplicative approximation
using $O(\epsilon^{-2} + \log n)$ bits of space, with $2/3$ success
probability.

A natural approach for reducing the space complexity of streaming
algorithms, \emph{without} settling on an approximation, is by
considering a probabilistic proof system. Chakrabarti at
el. \cite{CCM09} have shown \emph{data stream with annotations}
algorithms for several data stream problems, using a probabilistic
proof system that is very similar to $\MA$. This line of work
continued in \cite{CMT10}, wherein a probabilistic proof system was
used in order to reduce the streaming complexity of numerous graph
problems. In a subsequent work \cite{CMT11}, Chakrabarti at
el. provided a practical instantiation of one of the most efficient
general-purpose construction of an interactive proof for arbitrary
computations, due to Goldwasser at el. \cite{GKR08}.

In this work, we study the power of Arthur-Merlin probabilistic proof
systems in the data stream model. We show a canonical $\AM$ streaming
algorithm for a wide class of data stream problems. The algorithm
offers a tradeoff between the length of the proof and the space
complexity that is needed to verify it. We show that the problem of
\emph{Distinct Elements} falls within the class of problems that our
canonical algorithm can handle. Thus, we give an $\AM$ streaming
algorithm for the \emph{Distinct Elements} problem. Given a data
stream of length $m$ over alphabet of size $n$, the algorithm uses
$\tilde O(s)$ space and a proof of size $\tilde O(w)$, for every $s,w$
such that $s \cdot w \ge n$ (where $\tilde O$ hides a $\polylog(m,n)$
factor).

In addition, we give a lower bound on the $\MA$ streaming complexity
of the \emph{Distinct Elements} problem. Our lower bound for
\emph{Distinct Elements} relies on a new lower bound that we prove on
the $\MA$ communication complexity of the \emph{Gap Hamming Distance}
problem.

\subsection{Arthur-Merlin Probabilistic Proof Systems}
An $\MA$ (Merlin-Arthur) proof is a probabilistic extension of the
notion of proof in complexity theory. Proofs of this type are commonly
described as an interaction between two players, usually referred to
as Merlin and Arthur. We think of Merlin as an omniscient prover, and
of Arthur as a computationally bounded verifier. Merlin is supposed to
send Arthur a valid proof for the correctness of a certain
statement. After seeing both the input and Merlin's proof, with high
probability Arthur can verify a valid proof for a correct statement,
and reject every possible alleged proof for a wrong statement.

Formally, the complexity class $\MA(T,W)$ is defined as follows:
\begin{definition}
  Let $\epsilon \ge 0$, and let $T,W : \N \to \N$ be monotone
  functions.  A language $L$ is in $\MA_\epsilon (T,W)$ if there
  exists a randomized algorithm $V$ (the verifier) that receives an
  input $x$ (denote its size by $|x|$) and a proof (sometimes called
  witness) $w$, such that,
  \begin{enumerate}
  \item \textbf{Completeness}: For every $x \in L$, there exists a
    string $w$ of length at most $W(|x|)$ that satisfies
    \begin{align*}
      \Pr[ V(x,w) = 1 ] > 1 - \epsilon.
    \end{align*}
  \item \textbf{Soundness}: For every $x \not\in L$, and for any
    string $w$ of length at most $W(|x|)$,
    \begin{align*}
      \Pr [ V(x,w) = 1 ] < \epsilon.
    \end{align*}
  \item For every $x,w$ the running time of $V$ on $(x,w)$ is at most
    $T(|x|)$.
  \end{enumerate}
  Under these notations, we refer to $T$ as the \emph{time complexity}
  of the verifier. The function $W$ is referred to as the
  \emph{length} of the proof, and the sum $T+W$ is called the $\MA$
  complexity of the algorithm.
\end{definition}

An $\AM$ proof is defined almost the same as an $\MA$ proof, except
that in $\AM$ proof systems we assume that both the prover and the
verifier have access to a common source of randomness (alternatively,
$\AM$ proof systems can be described as $\MA$ proof systems that start
with an extra round, wherein Arthur sends Merlin a random string).

The notion of $\AM$ and $\MA$ proof systems can be extended to many
computational models. In this work we consider both the communication
complexity analogue of $\MA$, wherein Alice and Bob receive a proof
that they use in order to save communication, and the data stream
analogues of $\MA$ and $\AM$, wherein the data stream algorithm
receives a proof and uses it in order to reduce the required resources
for solving a data stream problem.

Recently, probabilistic proof systems for streaming algorithms have
been used to provide an abstraction of the notion of \emph{delegation
  of computation} to a cloud (see \cite{CMT10,CMT11,CKL11}). In the
context of cloud computing, a common scenario is one where a user
receives or generates a massive amount of data, which he cannot afford
to store locally. The user can stream the data he receives to the
cloud, keeping only a short certificate of the data he
streamed. Later, when the user wants to calculate a function of that
data, the cloud can perform the calculations and send the result to
the user. However, the user cannot automatically trust the cloud (as
an error could occur during the computation, or the service provider
might not be honest). Thus the user would like to use the short
certificate that he saved in order to verify the answer that he gets
from the cloud.

\subsection{Communication Complexity and the \emph{Gap Hamming Distance} Problem}
Communication complexity is a central model in computational
complexity. In its basic setup, we have two computationally unbounded
players, Alice and Bob, holding (respectively) binary strings $x,y$ of
length $n$ each. The players need to compute a function of both of the
inputs, using the least amount of communication between them.

In this work we examine the well known communication complexity
problem of \emph{Gap Hamming Distance} ($\GHD$), wherein each of the
two parties gets an $n$ bit binary string, and together the parties
need to tell whether the Hamming distance of the strings is larger
than $\frac{n}{2} + \sqrt n$ or smaller than $\frac{n}{2} - \sqrt n$
(assuming that one of the possibilities occurs). In \cite{CR11} a
tight linear lower bound was proven on the communication complexity of
a randomized communication complexity protocol for $\GHD$. Following
\cite{CR11}, a couple of other proofs (\cite{Vid11,She11}) were given
for the aforementioned lower bound. Relying on \cite{She11}, in this
work we give a tight lower bound of $\Omega(\sqrt n)$ on the $\MA$
communication complexity of $\GHD$.

\subsection{Our Results}
The main contributions in this work are:
\begin{enumerate}
\item A canonical $\AM$ streaming algorithm for a wide class of data
  stream problems, including the \emph{Distinct Elements} problem.
\item A lower bound on the $\MA$ streaming complexity of the
  \emph{Distinct Elements} problem.
\item A tight lower bound on the $\MA$ communication complexity of the
  \emph{Gap Hamming Distance} problem.
\end{enumerate}

In order to state the results precisely, we first introduce the
following notations: given a data stream $\sigma=(a_{1},\ldots,
a_{m})$ (over alphabet $[n]$), the \emph{element indicator}
$\chi_i:[n]\to\B$ of the $i$'th element ($i \in [m]$) of the stream
$\sigma$, is the function that indicates whether a given element is in
position $i\in[m]$ of $\sigma$, i.e., $\chi_i(j) = 1$ if and only if
$a_i=j$. Furthermore, let $\chi:[n]\to\B^m$ be the \emph{element
  indicator} of $\sigma$, defined by
\begin{align*}
  \chi(j) = \big( \chi_1(j), \ldots, \chi_m(j) \big).
\end{align*}
In addition, given $n \in \N$ we define a \emph{clause} over $n$
variables $x_1, \ldots, x_n$ as a function $C: \B^n \to \B$ of the
form $(y_1 \vee y_2 \vee \ldots \vee y_n)$, where for every $i\in[n]$
the literal $y_i$ is either a variable ($x_j$), a negation of a
variable ($\neg x_j$), or one of the constants $\B$.

Equipped with the notations above, we formally state our results.  Let
$0 \le \epsilon < 1/2$. Let $\mathcal{P}$ be a data stream problem
such that for every $m,n \in \N$ there exists a set of $k = k(m,n)$
clauses $\{ C_t \}_{t\in[k]}$ over $m$ variables, and a function $\psi
: \B^k \to \Z$, such that for every data stream
$\sigma=(a_1,\ldots,a_m)$ with alphabet $[n]$,
\begin{align*}
  (1-\epsilon) \mathcal{P}(\sigma) \leq \sum_{j=1}^n \psi
  \big(C_1\circ\chi(j), \ldots, C_k\circ\chi(j) \big) \leq
  (1+\epsilon) \mathcal{P}(\sigma).
\end{align*}
Moreover, we assume that $\psi$ and $\{ C_t \}_{t\in[k]}$ are known to
the verifier\footnote{For example, $\psi$ and $\{ C_t \}_{t\in[k]}$
  can be $\polylog(m,n)$-space uniform; that is, the
  \emph{description} of $\psi$ and $\{ C_t \}_{t\in[k]}$ can be
  computed by a deterministic Turing machine that runs in
  $\polylog(m,n)$ space.}, and that there exists $B \le \poly(m,n)$
such that $\psi(x)<B$ for every $x \in \B^k$.  Given such
$\mathcal{P}$, for every $0 < \delta \leq 1$ and every $s,w\in\N$ such
that $s \cdot w \ge n$, we give an $\AM$ streaming algorithm, with
error probability $\delta$, for approximating $\mathcal{P}(\sigma)$
within a multiplicative factor of $1 \pm \epsilon$.  The algorithm
uses space $O \big( s k \cdot \polylog(m, n, \delta^{-1}) \big)$, a
proof of size $W = O \big( w k \cdot \polylog (m, n, \delta^{-1})
\big)$, and randomness complexity $\polylog(m, n, \delta^{-1})$.

We show that the aforementioned algorithm, when applied to the
\emph{Distinct Elements} problem with parameters $s,w$ such that $s
\cdot w \ge n$, yields an $\AM$ streaming algorithm for the
problem. The algorithm computes, with probability at least $2/3$, the
exact number of distinct elements in a data stream of length $m$ over
alphabet $[n]$, using space $\tilde O(s)$ and a proof of size $\tilde
O(w)$ (where $\tilde O$ hides a $\polylog(m,n)$ factor). For example,
by fixing $w = n$, we get an $\AM$ streaming algorithm for the
\emph{Distinct Elements} problem that uses only polylogarithmic space.

We note that an interesting special case of the class of problems that
our canonical $\AM$ streaming algorithm handles can also be stated in
terms of Boolean circuits, instead of clauses. That is, given $0 \le
\epsilon < 1/2$ and a data stream problem $\mathcal{P}$ such that for
every $m,n \in \N$ there exists an unbounded fan-in Boolean circuit
$C:\B^m \to \B$ with $k = k(m,n)$ non-input gates, such that for every
data stream $\sigma=(a_1,\ldots,a_m)$ with alphabet $[n]$,
\begin{align*}
  (1-\epsilon) \mathcal{P}(\sigma) \leq \sum_{j=1}^n C \big(\chi_1(j),
  \ldots, \chi_m(j) \big) \leq (1+\epsilon) \mathcal{P}(\sigma).
\end{align*}
Assuming that $C$ is known to the verifier, we get an $\AM$ streaming
algorithm for $\mathcal{P}$ with the same parameters as in the
original formulation of the canonical $\AM$ algorithm above.

Our next result is a lower bound on the $\MA$ streaming complexity of
the \emph{Distinct Elements} problem. We show that every $\MA$
streaming algorithm that approximates, within a multiplicative factor
of $1 \pm \nicefrac{1}{\sqrt{n}}$, the number of distinct elements in
a data stream of length $m$ over alphabet $[n]$, using $s$ bits of
space and a proof of size $w$, must satisfy $s \cdot w = \Omega(n)$.

Last, we show a tight (up to a logarithmic factor) lower bound on the
$\MA$ communication complexity of the \emph{Gap Hamming Distance}
problem. For every $\MA$ communication complexity protocol for $\GHD$
that communicates $t$ bits and uses a proof of size $w$, we have $t
\cdot w = \Omega(n)$. We prove the tightness of the lower bound by
giving, for every $t,w\in\N$ such that $t \cdot w \ge n$, an $\MA$
communication complexity protocol for $\GHD$, which communicates $O(t
\log n)$ bits and uses a proof of size $O(w \log n)$.

\subsection{Techniques}
The main intuition behind our canonical $\AM$ streaming algorithm is
based on the ``algebrization'' inspired communication complexity
protocol of Aaronson and Wigderson \cite{AW09}. However our proof is
much more technically involved.

In general, say we have a data stream problem $\mathcal{P}$ and two
integers $s,w$ such that $s \cdot w \ge n$. If there exists a low
degree polynomial $g(x,y):\Z^2\to\Z$ (that depends on the input stream
$\sigma$) and two domains $\mathcal{D}_w, \mathcal{D}_s \subseteq \Z$
of cardinality $w,s$ (respectively) such that
\begin{align*}
  \mathcal{P}(\sigma) = \sum_{x\in\mathcal{D}_w}
  \sum_{y\in\mathcal{D}_s} g(x,y),
\end{align*}
then assuming we can efficiently evaluate $g$ at a random point, by a
straightforward adaptation of the \cite{AW09} protocol to the settings
of streaming algorithms, we obtain a simple $\MA$ streaming algorithm
for $\mathcal{P}$.

However, in our case we can only express $\mathcal{P}(\sigma)$ as
\begin{align*}
  \sum_{x \in \mathcal{D}_w} \sum_{y \in \mathcal{D}_s} \psi
  \big(C_1\circ\tilde{\chi}(x,y), \ldots, C_k\circ\tilde{\chi}(x,y)
  \big),
\end{align*}
where $k$ is a natural number, $\{ C_t \}_{t\in[k]}$ are clauses over
$m$ variables, $\psi : \B^k \to \Z$ is a function over the hypercube,
$\tilde{\chi}:\mathcal{D}_w \times \mathcal{D}_s \to \B^m$ is the
bivariate equivalent of the element indicator $\chi:[n]\to\B^m$, and
$\mathcal{D}_w, \mathcal{D}_s \subseteq \Z$ are domains of cardinality
$w,s$ (respectively).

The function $\psi \big(C_1\circ\tilde{\chi}(x,y), \ldots,
C_k\circ\tilde{\chi}(x,y) \big)$ is not a low degree polynomial.  We
would have liked to overcome this difficulty by using the
approximation method of \cite{ER86,S87}. The latter allows us to have
a low degree approximation of the clauses $\{ C_t \}_{t\in[k]}$, such
that with high probability (over the construction of the approximation
polynomials) we can replace the clauses with low degree polynomials,
without changing the output. The aforementioned randomized procedure
comes at a cost of turning the $\MA$ streaming algorithm to an $\AM$
streaming algorithm.

Yet, the above does not sufficiently reduces the degree of $\psi
\big(C_1\circ\tilde{\chi}(x,y), \ldots, C_k\circ\tilde{\chi}(x,y)
\big)$. This is due to the fact that the method of \cite{ER86,S87}
results with approximation polynomials over a finite field of
cardinality that is larger than $\mathcal{P}(\sigma)$. The degree of
the approximation polynomials is close to the cardinality of the
finite field, which in our case can be a large number ($\poly(m,n)$).

Instead we aim to apply the method of \cite{ER86,S87} to approximate
\begin{align*}
  \left\{ \mathcal{P}(\sigma) \pmod{q} \right\}_{q\in Q}
\end{align*}
for a set $Q$ of $\polylog(m,n)$ primes, each of size at most
$\polylog(m,n)$. This way, each approximation polynomial that we get
is over a finite field of cardinality $\polylog(m,n)$, and of
sufficiently low degree. Then, we use the \emph{Chinese Remainder
  Theorem} to extract the value of $\mathcal{P}(\sigma)$ from $\left\{
  \mathcal{P}(\sigma) \pmod{q} \right\}_{q\in Q}$.

Nonetheless, this is still not enough, as for every $q\in Q$ we want
the answer to be the summation of the polynomial approximation of
$\psi \big(C_1\circ\tilde{\chi}(x,y), \ldots,
C_k\circ\tilde{\chi}(x,y) \big) \pmod{q}$ over some domain
$\mathcal{D}_w \times \mathcal{D}_s \subseteq \Z^2$ (where
$|\mathcal{D}_w| = w$ and $|\mathcal{D}_s| = s$). Since the
cardinality of the field $\F_q$ is typically smaller than $w$ and $s$,
we use an extension (of sufficient cardinality) of the field $\F_q$.

At each step of the construction, we make sure that we perserve both
the restrictions that are imposed by the data stream model, and the
conditions that are needed to ensure an efficient verification of the
proof.

The idea behind our $\AM$ streaming algorithm for \emph{Distinct
  Elements} is simply noting that we can indicate whether an element
$j$ appears in the data stream, by the disjunction of the element
indicators of $j\in[n]$ in all of the positions of the stream (i.e.,
$\chi_1(j), \ldots, \chi_m(j)$). Then we can represent the number of
distinct elements as a sum of disjunctions, and use the canonical
$\AM$ streaming algorithm in order to solve the \emph{Distinct
  Elements} problem.

As for the lower bound on the $\MA$ streaming complexity of the
\emph{Distinct Elements} problem, we start by establishing a lower
bound on the $\MA$ communication complexity of the \emph{Gap Hamming
  Distance} problem ($\GHD$). A key element in the proof of the latter
is based on Sherstov's recent result \cite{She11} on the \emph{Gap
  Orthogonality} problem ($\ORT$) and its relation to $\GHD$. Sherstov
observed that the problem of \emph{Gap Orthogonality} readily reduces
to \emph{Gap Hamming Distance} problem. Although at first glance it
seems that the transition to $\ORT$ is of little substance, it turns
out that Yao's corruption bound \cite{Y83} suits it perfectly. In
fact, the corruption property for $\ORT$ is equivalent to the
anti-concentration property of orthogonal vectors in the Boolean
cube. Using this observation, we prove a lower bound on the $\MA$
communication complexity of $\ORT$ (following the method of
\cite{RS04}), which in turn, by the reduction from $\ORT$ to $\GHD$,
implies a lower bound on the $\MA$ communication complexity of
$\GHD$. Next we adapt the reduction that was implicitly stated in
\cite{IW03}, and reduce the $\MA$ communication complexity problem of
$\GHD$ to the $\MA$ problem of calculating the exact number of
\emph{Distinct Elements}.

\subsection{Related Work}
The data stream model has gained a great deal of attention after the
publication of the seminal paper by Alon, Matias and Szegedy
\cite{AMS96}. In the scope of that work, the authors have shown a
lower bound of $\Omega(n)$ (where $n$ is the size of the alphabet) on
the streaming complexity of \emph{Distinct Elements} (i.e., the
computation of the \emph{exact} number of distinct elements in a data
stream) where the length of the input is at least proportional to $n$.

Following \cite{AMS96} there was a long line of theoretical research
on the approximation of the \emph{Distinct Element} problem
(\cite{YJK02,IW03,BGJ07,BC09,KNW10}, see \cite{M05} for a survey of
earlier results). Finally, Kane at el. \cite{KNW10} gave the first
optimal approximation algorithm for estimating the number of distinct
elements in a data stream; for a data stream with alphabet of size
$n$, given $\epsilon>0$ their algorithm computes a $(1 \pm \epsilon)$
multiplicative approximation using $O(\epsilon^{-2} + \log n)$ bits of
space, with $2/3$ success probability. This result matches the tight
lower bound of Indyk and Woodruff \cite{IW03}.

In a recent sequence of works, the data stream model was extended to
support several interactive and non-interactive proof systems
\cite{CCM09,CMT10,CKL11}.  The model of streaming algorithms with
non-interactive proofs was first introduced in \cite{CCM09} and
extended in \cite{CMT10,CMT11}. In \cite{CCM09} the authors gave an
optimal (up to polylogarithmic factors) \emph{data stream with
  annotations} algorithm for computing the $k$'th frequency moment
exactly, for every integer $k \ge 1$.

\section{Preliminaries}
\subsection{Communication Complexity}
Let $X,Y,Z$ be finite sets, and let $f : X \times Y \to Z$ be a
(possibly partial) function. In the \emph{two-party probabilistic
  communication complexity model} we have two computationally
unbounded players, traditionally referred to as Alice and Bob. Both
players share a random string. Alice gets as an input $x \in X$. Bob
gets as an input $y \in Y$. At the beginning, none of the players has
any information regarding the input of the other player. Their common
goal is to compute the value of $f(x,y)$, using a protocol that
communicates as small number of bits as possible. In each step of the
protocol, one of the players sends one bit to the other player. This
bit may depend on the player's input, the common random string, as
well as on all previous bits communicated between the two players. At
the end of the protocol, both players have to know the value of
$f(x,y)$ with high probability.

\subsubsection{MA Communication Complexity}
In $\MA$ \emph{communication complexity protocols}, we have a
(possibly partial) function $f : X \times Y \to \B$ (for some finite
sets $X,Y$), and three computationally unbounded parties: Merlin,
Alice, and Bob. The function $f$ is known to all parties. Alice gets
as an input $x \in X$. Bob gets as an input $y \in Y$. Merlin sees
both $x$ and $y$. We think of Merlin as a \emph{prover}, and think of
Alice and Bob as \emph{verifiers}. We assume that Alice and Bob share
a private random string that Merlin cannot see.

At the beginning of an $\MA$ \emph{communication complexity protocol},
Merlin sends a proof string $w$ to both Alice and Bob, so both players
have a free access to $w$. The players proceed as before.  In each
step of the protocol, one of the players sends one bit to the other
player. At the end of the protocol, both players have to know an
answer $z$. Hence, the answer depends on the input $(x,y)$ as well as
on the proof $w$. For a protocol $P$, denote by $P \big( (x,y), w
\big)$ the probabilistic answer $z$ given by the protocol on input
$(x,y)$ and proof $w$.

An $\MA$ \emph{communication complexity protocol} has three
parameters: a limit on the probability of error of the protocol,
denoted by $\epsilon$; a limit on the number of bits of communication
between Alice and Bob, denoted by $T$; and a limit on the length of
Merlin's proof string, denoted by $W$.

With the above in mind, we can now define $\MA_\epsilon (T,W)$
communication complexity as follows:

\begin{definition}
  An $\MA_\epsilon (T,W)$-communication complexity protocol for $f$ is
  a probabilistic communication complexity protocol $P$, as above
  (i.e., with an additional proof string $w$ presented to the
  players). During the protocol, Alice and Bob communicate at most $T$
  bits. The protocol satisfies,
  \begin{enumerate}
  \item \textbf{Completeness}: for all $(x,y)\in f^{-1}(1)$, there
    exists a string $w$ such that $|w|<W$, that satisfies
    \begin{align*}
      \Pr \left[ P \big( (x,y), w \big) = 1 \right] > 1 - \epsilon.
    \end{align*}
  \item \textbf{Soundness}: for all $(x,y)\in f^{-1}(0)$ and for any
    string $w$ such that $|w|<W$, we have
    \begin{align*}
      \Pr \left[ P \big( (x,y) , w \big) = 1 \right] < \epsilon.
    \end{align*}
  \end{enumerate}
\end{definition}

\subsubsection{The \emph{Gap Hamming Distance} Problem}
Let $n \in \N$, and let $\zeta_0,\zeta_1>0$. We define the \emph{Gap
  Hamming Distance} problem as follows:

\begin{definition}
  The \emph{Gap Hamming Distance} problem is the communication
  complexity problem of computing the partial Boolean function
  $\GHD_{n,\zeta_0,\zeta_1}:\BM^n\times\BM^n\to \B$ given by
  \begin{align*}
    \GHD_{n,\zeta_0,\zeta_1} (x,y)=
    \begin{cases}
      1 & if\quad \langle x,y\rangle > \zeta_1 \\
      0 & if\quad \langle x,y\rangle < - \zeta_0 \\
    \end{cases}
  \end{align*}

  Denote $\GHD=\GHD_{n,\sqrt{n},\sqrt{n}}$.
\end{definition}

\subsection{Streaming Complexity}
Let $\epsilon \ge 0$, $\delta>0$. Let $m,n\in\N$. A \emph{data stream}
$\sigma=(a_{1},\ldots, a_{m})$ is a sequence of elements, each from
$[n] = \{1,\ldots,n\}$. We say that the \emph{length} of the stream is
$m$, and the \emph{alphabet size} is $n$.

A \emph{streaming algorithm} is a space-bounded probabilistic
algorithm that gets an element-by-element access to a \emph{data
  stream}. After each element arrives, the algorithm can no longer
access the elements that precede it. At the end of its run, the
\emph{streaming algorithm} is required to output (with high
probability) a certain function of the \emph{data stream} that it
read. When dealing with \emph{streaming algorithms}, the main resource
we are concerned with is the size of the space that the algorithm
uses.

Formally, a \emph{data stream problem} $\mathcal{P}$ is a collection
of functions $\{ f_{m,n}:[n]^{m}\to\R \}_{m,n \in \N}$. That is, a
function for every combination of \emph{length} and \emph{alphabet
  size} of a \emph{data stream}. However, slightly abusing notation
for the sake of brevity, we will define each \emph{data stream
  problem} by a single function (which in fact depends on the length
$m$ and alphabet size $n$ of the \emph{data stream}). A
$\delta$-error, $\epsilon$-approximation data stream algorithm
$\mathcal{A}_{\epsilon,\delta}$ for $\mathcal{P}$ is a probabilistic
algorithm that gets a sequential, one pass access to a \emph{data
  stream} $\sigma=(a_{1},\ldots, a_{m})$ (where each $a_{i}$ is a
member of $[n]$), and satisfies:
\begin{align*}
  \Pr\left[ \left|
      \frac{\mathcal{A_{\epsilon,\delta}}(\sigma)}{f_{m,n}(\sigma)} -
      1 \right| > \epsilon \right] < \delta.
\end{align*}
If $\epsilon=0$ we say that the \emph{streaming algorithm} is exact.

Last, given a data stream problem $\mathcal{P} = \{ f_{m,n}:[n]^m \to
\R \}_{m,n \in \N}$ and a data stream $\sigma = (a_1, \ldots, a_m)$
(with alphabet $[n]$) we denote by $\mathcal{P}(\sigma)$ the output of
$f_{m,n}(\sigma)$, for the $f_{m,n} \in \mathcal{P}$ that matches the
length and alphabet size of $\sigma$. Similarly, when applying a
family of functions to $\sigma$, we in fact apply a specific function
in the family, according to the parameters $m,n$ of $\sigma$.

\subsubsection{The \emph{Distinct Elements} Problem}
The \emph{Distinct Elements} problem is the problem of computing the
exact number of distinct elements that appear in a data stream,
denoted by $F_0(\sigma)$. Formally, we define:
\begin{definition}
  The \emph{Distinct Elements} problem is the data stream problem of
  computing the exact number of distinct elements in a data stream
  $\sigma=(a_{1},\ldots, a_{m})$ (where $a_{i} \in [n]$ for every
  $i$), i.e., computing (exactly):
  \begin{align*}
    F_0(\sigma) = \big |\left\{ i\in\N \; : \; \exists j\in[m] \;\;
      a_{j}=i \right\} \big|.
  \end{align*}
\end{definition}
Note that if we define $0^0=0$ then this is exactly the $0$'th
frequency moment of the stream. Hence the notation $F_0$.

\section{Streaming Algorithms with Probabilistic Proof Systems}
\label{sec:pps}
In this section we extend the data stream computational model in order
to support two types of probabilistic proof systems: $\MA$ algorithms,
wherein the streaming algorithm gets a proof that it probabilistically
verifies, and $\AM$ algorithms that extend $\MA$ algorithms by adding
shared randomness. We study both of these probabilistic proof systems
in two variations: in the first, the proof is also being streamed to
the verifier, and in the second, the verifier has a free access to the
proof. Formal definitions follow.

\subsection{MA Streaming Algorithms}
Similarly to the way $\MA$ communication complexity protocols are
defined, in $\MA$ streaming algorithms we have an omniscient prover
(Merlin) who sends a proof to a verifier (Arthur), which is in fact a
streaming algorithm that gets both the input stream and the proof
(either by a free access or by a one-pass, sequential access). The
streaming algorithm computes a function of the input stream. Using the
proof we hope to achieve a better space complexity than what the
regular streaming model allows.

We start with $\MA$ proofs wherein the proof is being streamed to the
verifier. Formally, we define

\begin{definition}
  Let $\epsilon \ge 0$, $\delta>0$, and let $\mathcal{P} = \{
  f_{m,n}:[n]^m\to \R \}_{m,n \in \N}$ be a data stream problem. An
  $\MA$ streaming algorithm for $\mathcal{P}$ is a probabilistic data
  stream algorithm $\mathcal{A}$, which simultaneously gets two
  streams: an input stream $\sigma=(a_{1},\ldots, a_{m})$ (where
  $a_{i} \in [n]$ for every $i$) and a proof stream $\omega$; to both
  it has a sequential, one pass access. Given two functions $S,W :
  \N^2 \to \N$, we say that an $\MA$ streaming algorithm is
  $\MA_{\epsilon,\delta} \big( S(m,n), W(m,n) \big)$ if it uses at
  most $S(m,n)$ bits of space, and satisfies:
  \begin{enumerate}
  \item \textbf{Completeness}: for every $\sigma=(a_{1},\ldots,
    a_{m})$ (with \emph{alphabet} $[n]$) there exists a non empty set
    $\mathcal{W_\sigma}$ of proof streams of length at most $W(m,n)$,
    such that for every $\omega\in\mathcal{W_\sigma}$ we have,
    \begin{align*}
      \Pr \left[ \left|
          \frac{\mathcal{A}(\sigma,\omega)}{f_{m,n}(\sigma)} -1
        \right| \le \epsilon \right] >1-\delta
    \end{align*}
  \item \textbf{Soundness}: for every $\sigma=(a_{1},\ldots, a_{m})$
    (with \emph{alphabet} $[n]$), and for every
    $\omega\not\in\mathcal{W_\sigma}$ we have
    \begin{align*}
      \Pr[\mathcal{A}(\sigma,\omega)\neq \bot]<\delta
    \end{align*}
    where $\bot\not\in\R$ is a symbol that represents that the
    algorithm could not verify the correctness of the proof.
  \end{enumerate}
\end{definition}

The second natural way to define an $\MA$ probabilistic proof system
for the data stream model, is by allowing the algorithm a free access
to the proof. This leads to the following definition:

\begin{definition}
  Let $\epsilon \ge 0$, $\delta>0$, and let $\mathcal{P} = \{
  f_{m,n}:[n]^m\to \R \}_{m,n \in \N}$ be a data stream problem. An
  $\widehat{\MA}$ streaming algorithm for $\mathcal{P}$ is a
  probabilistic data stream algorithm $\mathcal{A}^w$, which has a
  free oracle access to a proof string $w$. The algorithm gets a
  stream $\sigma=(a_{1},\ldots, a_{m})$ (where $a_{i} \in [n]$ for
  every $i$) as an input, to which it has a sequential, one pass
  access. Given two functions $S,W : \N^2 \to \N$, we say that an
  $\widehat{\MA}$ streaming algorithm is
  $\widehat{\MA}_{\epsilon,\delta} \big( S(m,n), W(m,n) \big)$ if it
  uses at most $S(m,n)$ bits of space, and satisfies:
  \begin{enumerate}
  \item \textbf{Completeness}: for every $\sigma=(a_{1},\ldots,
    a_{m})$ (with \emph{alphabet} $[n]$), there exists a non empty set
    $\mathcal{W}_\sigma$ of proof strings of length at most $W(m,n)$,
    such that for every $w\in\mathcal{W}_\sigma$ we have,
    \begin{align*}
      \Pr \left[ \left| \frac{\mathcal{A}^w(\sigma)}{f_{m,n}(\sigma)}
          -1 \right| \le \epsilon \right] >1-\delta
    \end{align*}
  \item \textbf{Soundness}: for every $\sigma=(a_{1},\ldots, a_{m})$
    (with \emph{alphabet} $[n]$), and for every
    $w\not\in\mathcal{W}_\sigma$ we have
    \begin{align*}
      \Pr[\mathcal{A}^w(\sigma)\neq \bot]<\delta
    \end{align*}
    where $\bot\not\in\R$ is a symbol that represents that the
    algorithm could not verify the correctness of the proof.
  \end{enumerate}
\end{definition}

Note that by definition, the model of $\MA$ streaming with a free
access to the proof is stronger than the model of $\MA$ streaming with
a proof stream. Thus when in Section~\ref{sec:distinctelements} we
prove lower bounds on the $\widehat{\MA}$ streaming complexity, it
also implies lower bounds on the $\MA$ streaming complexity.

\subsection{AM Streaming Algorithms}
We can further extend the data stream model to support an $\AM$
probabilistic proof system. Similarly to the case of $\MA$ proofs, an
$\AM$ streaming algorithm receives a proof stream and an input stream,
to which it has a sequential, one pass access; except that in $\AM$
proof systems the prover and verifier also share a common random
string. Formally, we define
\begin{definition}
  Let $\epsilon \ge 0$, $\delta>0$, and let $\mathcal{P} = \{
  f_{m,n}:[n]^m\to \R \}_{m,n \in \N}$ be a data stream problem. An
  $\AM$ streaming algorithm for $\mathcal{P}$ is a probabilistic data
  stream algorithm $\mathcal{A}^r$ that has an oracle access to a
  common random string $r$, and that is also allowed to make private
  random coin tosses. The algorithm simultaneously gets two streams:
  an input stream $\sigma=(a_{1},\ldots, a_{m})$ (where $a_{i} \in
  [n]$ for every $i$) and a proof stream $\omega$, to both it has a
  sequential, one pass access. Given two functions $S,W : \N^2 \to
  \N$, we say that an $\AM$ streaming algorithm is
  $\AM_{\epsilon,\delta} \big( S(m,n), W(m,n) \big)$ if it uses at
  most $S(m,n)$ bits of space, and satisfies that for every
  $\sigma=(a_{1},\ldots, a_{m})$ (over \emph{alphabet} $[n]$), with
  probability at least $1 - \delta/2$ (over $r$) there exists a non
  empty set $\mathcal{W}_\sigma(r)$ of proof streams of length at most
  $W(m,n)$, such that:
  \begin{enumerate}
  \item \textbf{Completeness}: For every
    $\omega\in\mathcal{W}_\sigma(r)$
    \begin{align*}
      \Pr \left[ \left|
          \frac{\mathcal{A}^r(\sigma,\omega)}{f_{m,n}(\sigma)} -1
        \right| \le \epsilon \right] > 1 - \frac{\delta}{2},
    \end{align*}
    where the probability is taken over the private random coin tosses
    of $\mathcal{A}^r$.
  \item \textbf{Soundness}: For $\omega\not\in\mathcal{W}_\sigma(r)$
    \begin{align*}
      \Pr \left[ \mathcal{A}^r(\sigma,\omega) = \bot \right] > 1 -
      \frac{\delta}{2},
    \end{align*}
    where the probability is taken over the private random coin tosses
    of $\mathcal{A}^r$, and $\bot\not\in\R$ is a symbol that
    represents that the algorithm could not verify the correctness of
    the proof.
  \end{enumerate}
  The \emph{randomness complexity} of the algorithm is the total size
  of the common random string $r$, and the number of private random
  coin tosses that the algorithms performs.
\end{definition}
Note that we slightly deviate from the standard definition of an $\AM$
algorithm, by allowing $\mathcal{A}$ to be a probabilistic algorithm
with a private random string.

Just as with the $\MA$ streaming model, we can define $\widehat{\AM}$
streaming algorithms by allowing a free access to the proof. Again, by
definition the model of $\AM$ streaming with a free access to the
proof is stronger than the model of $\AM$ streaming with a proof
stream. Our canonical $\AM$ algorithm works for the weaker model,
wherein the proof is being streamed, thus our $\AM$ upper bounds also
implies $\widehat{\AM}$ upper bounds.

\textbf{Note 1:} In both of the models ($\MA$ and $\AM$), as
traditionally done in Arthur-Merlin probabilistic proof systems, we
will sometimes describe the $\MA$/$\AM$ algorithm as an interaction
between an omniscient prover Merlin, who sends an alleged proof of a
statement to Arthur, a computationally limited verifier (in our case,
a streaming algorithm), who in turn probabilistically verifies the
correctness of Merlin's proof.

\textbf{Note 2:} In all of our ($\MA$ and $\AM$) algorithms, we assume
without loss of generality that Arthur knows both the length $m$ and
the alphabet size $n$. This can be done since we can insert $m,n$ at
the beginning of the proof. Then, Arthur only needs to verify that the
length of the stream was indeed $m$, and that no element was bigger
than $n$. Since all of the algorithms we present in this paper are
$\Omega(\log m + \log n)$ in both proof size and space complexity,
this does not change their overall asymptotical complexity.

\section{The Canonical AM Streaming Algorithm}
\label{sec:canonical}
In this section we show our canonical $\AM$ algorithm. Recall that
given a data stream $\sigma=(a_{1},\ldots, a_{m})$ (over alphabet
$[n]$), the \emph{element indicator} $\chi_i:[n]\to\B$ of the $i$'th
element ($i \in [m]$) of the stream $\sigma$, is the function that
indicates whether a given element is in position $i\in[m]$ of
$\sigma$, i.e., $\chi_i(j) = 1$ if and only if $a_i=j$. Furthermore,
let $\chi:[n]\to\B^m$ be the \emph{element indicator} of $\sigma$,
defined by
\begin{align*}
  \chi(j) = \big( \chi_1(j), \ldots, \chi_m(j) \big).
\end{align*}
In addition, given $n \in \N$ we define a \emph{clause} over $n$
variables $x_1, \ldots, x_n$ as a function $C: \B^n \to \B$ of the
form $(y_1 \vee y_2 \vee \ldots \vee y_n)$, where for every $i\in[n]$
the literal $y_i$ is either a variable ($x_j$), a negation of a
variable ($\neg x_j$), or one of the constants $\B$.

We prove the following theorem:
\begin{theorem}
  \label{thm:main}
  Let $0 \le \epsilon < 1/2$. Let $\mathcal{P}$ be a data stream
  problem such that for every $m,n \in \N$ there exists a set of $k =
  k(m,n)$ clauses $\{ C_t \}_{t\in[k]}$ over $m$ variables, and a
  function $\psi : \B^k \to \Z$, such that for every data stream
  $\sigma=(a_1,\ldots,a_m)$ with alphabet $[n]$,
  \begin{align*}
    (1-\epsilon) \mathcal{P}(\sigma) \leq \sum_{j=1}^n \psi
    \big(C_1\circ\chi(j), \ldots, C_k\circ\chi(j) \big) \leq
    (1+\epsilon) \mathcal{P}(\sigma).
  \end{align*}
  Moreover, we assume that $\psi$ and $\{ C_t \}_{t\in[k]}$ are known
  to the verifier, and that there exists $B \le \poly(m,n)$ such that
  $\psi(x)<B$ for every $x \in \B^k$. Then, for every $0 < \delta \leq
  1$ and every $s,w\in\N$ such that $s \cdot w \ge n$, there exists an
  explicit $\AM_{\epsilon,\delta}(S,W)$-streaming algorithm for
  approximating $\mathcal{P}(\sigma)$; where $S = O \big( s k \cdot
  \polylog(m, n, \delta^{-1}) \big)$, $W = O \big( w k \cdot \polylog
  (m, n, \delta^{-1}) \big)$, and the randomness complexity is
  $\polylog(m, n, \delta^{-1})$.
\end{theorem}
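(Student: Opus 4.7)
The plan is to combine a bivariate embedding of $[n]$ into a $w \times s$ grid with the \cite{ER86,S87} low-degree approximation applied prime-by-prime (and patched via the Chinese Remainder Theorem), then wrap an Aaronson-Wigderson-style \cite{AW09} sum-check around the whole thing to execute inside the stream. Concretely, I would fix any bijection $\pi : \mathcal{D}_w \times \mathcal{D}_s \to [n]$ (padded with dummies of $\psi$-value $0$ when $sw > n$), set $\tilde\chi(x,y) := \chi(\pi(x,y))$, and thereby rewrite the target quantity of the theorem as $\sum_{(x,y) \in \mathcal{D}_w \times \mathcal{D}_s} \psi(C_1\!\circ\!\tilde\chi(x,y),\ldots,C_k\!\circ\!\tilde\chi(x,y))$. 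Since this sum is bounded by $Bn \leq \poly(m,n)$, it is recovered by CRT from its residues modulo a collection $Q$ of $\polylog(m,n)$ primes of size $\polylog(m,n)$, so the rest of the protocol is run independently for each $q \in Q$ and the answers stitched at the end.

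Fix $q \in Q$. Using the shared AM randomness I would construct, via the \cite{ER86,S87} method, polynomials $\widetilde C_t$ over $\F_q$ of degree $\polylog(m,n,\delta^{-1})$ that agree with each clause $C_t$ on any fixed Boolean input except with probability $\delta/\poly(m,n)$; the known function $\psi : \B^k \to \Z$ admits an explicit multilinear extension $\widetilde\psi$ of degree $k$. To house $\mathcal{D}_w$ and $\mathcal{D}_s$ I would work in an extension $\mathbb{F} \supseteq \F_q$ of size $\max(w,s) \cdot \polylog$, embed $\mathcal{D}_w,\mathcal{D}_s \subseteq \mathbb{F}$, and interpolate each coordinate $\tilde\chi_i$ into a bivariate polynomial of bidegree $(w{-}1,\,s{-}1)$ on $\mathbb{F}^2$. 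Setting
\begin{align*}
  g(x,y) \;:=\; \widetilde\psi\bigl(\widetilde C_1(\tilde\chi(x,y)),\ldots,\widetilde C_k(\tilde\chi(x,y))\bigr),
\end{align*}
a union bound over the $nk|Q|$ bad events yields, with probability at least $1-\delta/2$ over the shared string, that $\sum_{(x,y) \in \mathcal{D}_w \times \mathcal{D}_s} g(x,y) \equiv \sum_j \psi(C_1\!\circ\!\chi(j),\ldots,C_k\!\circ\!\chi(j)) \pmod q$ simultaneously for every $q \in Q$.

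The streaming verification is a one-round AW sum-check in $x$: Merlin sends a claimed value $V_q$ together with the univariate polynomial $p(x) := \sum_{y \in \mathcal{D}_s} g(x,y)$, whose degree $\deg_x(g) = O(wk \cdot \polylog)$ dictates the proof size $O(wk \cdot \polylog)$. Arthur draws private randomness $x_0 \in \mathbb{F}$ and, as the stream arrives, maintains for each pair $(y,t) \in \mathcal{D}_s \times [k]$ the partial state required to evaluate $\widetilde C_t(\tilde\chi(x_0,y))$ --- a $\polylog$-sized object per pair because $\widetilde C_t$ is a low-degree polynomial in the $\tilde\chi_i(x_0,y)$'s --- so the total space is $O(sk \cdot \polylog)$. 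He accepts iff $\sum_{x \in \mathcal{D}_w} p(x) = V_q$ and $p(x_0)$ matches his own computed $\sum_{y \in \mathcal{D}_s} g(x_0,y)$; Schwartz-Zippel bounds the soundness error by $\deg_x(g)/|\mathbb{F}|$, which is $\delta/\poly(m,n)$. Finally, Arthur applies CRT to $\{V_q\}_{q \in Q}$ to recover a $(1 \pm \epsilon)$-approximation of $\mathcal{P}(\sigma)$.

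I expect the main obstacle to be reconciling three competing constraints: the ER/S polynomials must have polylogarithmic degree, their ambient field must contain $\mathcal{D}_w,\mathcal{D}_s$ and still support a Schwartz-Zippel random point, and $\widetilde C_t(\tilde\chi(x_0,\cdot))$ must be evaluable incrementally in $\polylog$ space per index. A single-shot ER/S approximation would live over a field of size larger than $\mathcal{P}(\sigma)$ and hence have degree $\poly(m,n)$, blowing up both proof and space; the small-primes-plus-extension-field-plus-CRT triple detour is precisely what keeps all degrees polylogarithmic while preserving streamability.
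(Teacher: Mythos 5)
Your route is exactly the paper's: the bivariate Lagrange embedding of $[n]$ into $\mathcal{D}_w\times\mathcal{D}_s$, the \cite{ER86,S87} clause approximation carried out modulo each prime in a set $Q$ of $\polylog(m,n)$ small primes and recombined via the Chinese Remainder Theorem, extension fields large enough to contain the two domains, a Merlin-sent univariate polynomial obtained by summing out $y$, and a Schwartz--Zippel spot-check at a privately drawn point that Arthur evaluates from the stream. However, one parameter and one justification fail as stated. The extension field you propose, of size $\max(w,s)\cdot\polylog$, is too small: with $\deg_x(g)=O(wk\cdot\polylog)$ the Schwartz--Zippel error $\deg_x(g)/|\mathbb{F}|$ is not $\delta/\poly(m,n)$ --- in the flagship regime $w=n$, $s=1$ it is about $k\cdot\polylog\geq 1$, so the spot-check is vacuous. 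The paper sets $p$ of order $\delta^{-1}\poly(m,n)$ and takes $|\H_q|>p$, making the ratio at most $\delta/2$ (Lemma~\ref{lem:verify}); the repair costs only a logarithmic factor in the bit-length of field elements, so your $S,W$ bounds survive, but the field must be chosen this way, not merely large enough to house $\mathcal{D}_w,\mathcal{D}_s$.

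More substantively, your space bound is asserted for the wrong reason. That each $\widetilde C_t(\tilde\chi(x_0,y))$ can be maintained with a $\polylog$-sized state per pair $(y,t)$ ``because $\widetilde C_t$ is a low-degree polynomial in the $\tilde\chi_i(x_0,y)$'s'' is not valid: a generic low-degree $m$-variate polynomial whose inputs arrive one coordinate at a time (e.g.\ $\sum_{i\le m/2} z_i z_{m+1-i}$) needs $\Omega(m)$ memory in one pass. What makes the evaluation streamable is the specific structure of the \cite{ER86,S87} approximators, $1-\prod_{l\le L}\big(1-\ECC(z)_{\iota_l}^{q-1}\big)$ with $\ECC$ linear: for each $(y,t)$ Arthur keeps the $L$ field elements $\ECC(\ell^t_1(\xi,y),\ldots,\ell^t_m(\xi,y))_{\iota_l}$, updating them additively as each stream element arrives, and only at the end applies the powers, the product, and $\widetilde\psi_q$. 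This incremental-by-linearity argument is the content of the paper's Lemma~\ref{lem:sub_alg}; you flag it as the expected obstacle but do not resolve it, and it has to be argued from the linearity of the code, not from low degree.
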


\begin{proof}
  Let $0 \le \epsilon < 1/2$. Let $\mathcal{P}$ be a data stream
  problem such that for every $m,n \in \N$ there exists a set of $k =
  k(m,n)$ clauses $\{ C_t \}_{t\in[k]}$ over $m$ variables, and a
  function $\psi : \B^k \to \Z$, such that for every data stream
  $\sigma=(a_1,\ldots,a_m)$ with alphabet $[n]$,
  \begin{align}
    \label{eq:assump}
    (1-\epsilon) \mathcal{P}(\sigma) \leq \sum_{j=1}^n \psi
    \big(C_1\circ\chi(j), \ldots, C_k\circ\chi(j) \big) \leq
    (1+\epsilon) \mathcal{P}(\sigma).
  \end{align}
  Assume that $\psi$ and $\{ C_t \}_{t\in[k]}$ are known to the
  verifier, and that there exists $B \le \poly(m,n)$ such that
  $\psi(x)<B$ for every $x \in \B^k$. Observe that since $\psi$ gets
  $\B$ values as inputs, we can think of $\psi$ as a multilinear
  polynomial. Assume without loss of generality that $k \le m$
  (otherwise the theorem follows trivially).

  Let $0 < \delta \leq 1$ and let $s,w\in\N$ such that $s \cdot w \ge
  n$ (assume for simplicity and without loss of generality that $s
  \cdot w = n$ exactly). We show that there exists an explicit
  $\AM_{\epsilon,\delta}(S,W)$-streaming algorithm for approximating
  $\mathcal{P}(\sigma)$; where
  \begin{align*}
    S & = O \big( s k \cdot \polylog(m, n, \delta^{-1}) \big), \\
    W & = O \big( w k \cdot \polylog (m, n, \delta^{-1}) \big),
  \end{align*}
  and the randomness complexity is $\polylog (m, n, \delta^{-1})$.

  Let $\sigma=(a_1,\ldots,a_m)$ be a data stream with alphabet
  $[n]$. The first step is representing the middle term of
  (\ref{eq:assump}) as a summation of a low degree polynomial over
  some domain. Specifically, we represent the element indicators $\{
  \chi_i \}_{i\in[m]}$ as bivariate polynomials over a finite field.

  Let $p$ be a sufficiently large (to be determined later) prime
  number of order
  \begin{align*}
    \frac{1}{\delta} \cdot \poly(m,n)
  \end{align*}
  such that: $p > 2nB > \mathcal{P}(\sigma)$. Let
  $\mathcal{D}_s(\F_p)$ be any efficiently enumerable subset, of
  cardinality $s$, of the field $\F_p$ (e.g., the lexicographically
  first elements in some representation of the field
  $\F_p$). Likewise, let $\mathcal{D}_w(\F_p)$ be any efficiently
  enumerable subset, of cardinality $w$, of the field $\F_p$. Note
  that since $n=w \cdot s$, there exists a one-to-one mapping between
  the domain $[n]$ and the domain $\mathcal{D}_w(\F_p)\times
  \mathcal{D}_s(\F_p)$. Fix such (efficiently computable) mapping
  $\pi:[n] \to \mathcal{D}_w(\F_p)\times \mathcal{D}_s(\F_p)$ (e.g.,
  according to the lexicographic order).

  For every $i \in [m]$ we can view $\chi_i:[n]\to\B$ as a bivariate
  polynomial $\tilde{\chi}_i:\F_p^2 \to \F_p$ of degree $w-1$ in the
  first variable (which we denote by $x$), and degree $s-1$ in the
  second variable (which we denote by $y$), such that for every
  $j\in[n]$ we have $\tilde{\chi}_i \circ \pi(j) = \chi_i(j)$. If we
  denote $(\alpha_i,\beta_i) \coloneqq \pi(a_i)$, then the extension
  $\tilde{\chi}_i:\F_p^2 \to \F_p$ is given explicitly by the Lagrange
  interpolation polynomial:
  \begin{align}
    \label{eq:lagrange}
    \tilde{\chi}_i(x,y) = \frac{\displaystyle \prod_{\substack{a \in
          \mathcal{D}_w(\F_p) \\ a \neq \alpha_i}} (x-a)
      \prod_{\substack{ b \in \mathcal{D}_s(\F_p) \\ b \neq \beta_i}}
      (y-b)} {\displaystyle \prod_{\substack{a \in \mathcal{D}_w(\F_p)
          \\ a \neq \alpha_i}} (\alpha_i-a) \prod_{\substack{ b \in
          \mathcal{D}_s(\F_p) \\ b \neq \beta_i}} (\beta_i-b)}
  \end{align}
  Note that for every $\xi \in \mathcal{D}_s(\F_p)$, the degree of the
  univariate polynomial $\tilde{\chi}_i(\cdot, \xi): \F_p \to \F_p$ is
  at most $w-1$.

  Let $\tilde{\chi} : \F_p^2 \to \F_p^m$ be the polynomial extension
  of the \emph{element indicator} of $\sigma$, defined by
  \begin{align*}
    \tilde{\chi}(x,y) = \big( \tilde{\chi}_1(x,y), \ldots,
    \tilde{\chi}_m(x,y) \big).
  \end{align*}
  Plugging-in the polynomial extensions of the element indicators to
  (\ref{eq:assump}) yields that
  \begin{align}
    \label{eq:bivariate_representation}
    \widetilde{\mathcal{P}}(\sigma) \coloneqq \sum_{x \in
      \mathcal{D}_w(\F_p) } \sum_{y \in \mathcal{D}_s(\F_p) } \psi
    \big(C_1\circ\tilde{\chi}(x,y), \ldots, C_k\circ\tilde{\chi}(x,y)
    \big)
  \end{align}
  (where the summation is over $\Z$) approximates
  $\mathcal{P}(\sigma)$ within a multiplicative factor of $1 \pm
  \epsilon$. Later, we will give analogous expressions of
  $\mathcal{P}(\sigma) \pmod{q}$ for prime numbers $q = O(\log p)$.

  Next, we replace each clause in (\ref{eq:bivariate_representation})
  with a low degree polynomial (over a small finite field) that
  approximates it. Towards this end, we show the following lemma
  (originated in \cite{ER86,S87}):
  \begin{lemma}
    \label{lem:circuit_approx}
    Let $\delta'>0$, let $q$ be a prime number, and let $\{ C_t
    \}_{t\in[k]}$ be a set of $k$ clauses over $m$ variables. Using
    $\polylog(m, k, \delta'^{-1})$ random coin flips, we can construct
    a set of polynomials $\{p_t:\F_q^m \to \F_q\}_{t\in[k]}$ of degree
    $O(q \log \nicefrac{k} {\delta'})$ each, such that for every $x
    \in \B^m$,
    \begin{align*}
      \Pr\left[ \forall t\in[k] \quad p_t(x) = C_t(x) \right] \geq 1 -
      \delta'
    \end{align*}
    (where the probability is taken over the random coin flips
    performed during the construction of $\{p_t\}_{t\in[k]}$).
  \end{lemma}

  \begin{proof}
    Consider $\mathcal{C} \coloneqq \{ C_t \}_{t\in[k]}$, where for
    every $t\in[k]$, $C_t$ is a clause over $m$ variables. We
    approximate each $C_t\in\mathcal{C}$ by a polynomial $p_t:\F_q^m
    \to \F_q$. Recall that every clause in $\mathcal{C}$ is an
    $m$-variate disjunction gate that operates on literals, which are
    either a variable, or a negation of a variable, or one of the
    constants $\B$.

    In order to construct a polynomial approximation of a clause
    $C_t\in\mathcal{C}$, we first replace each negation gate over a
    variable $x$ in $C_t$, with the polynomial $1-x$. Note that this
    polynomial computes the negation exactly (i.e., no approximation).

    Next, we use the method of \cite{ER86,S87} to approximate the
    $m$-variate disjunction gate of $C_t$, by constructing an
    approximation polynomial in the following way: let $\ECC:\F_q^m\to
    \F_q^{100m}$ be a linear error correcting code with relative
    distance $1/3$. Fix
    \begin{align*}
      L = O \left( \log k + \log\frac{1}{\delta'} \right),
    \end{align*}
    such that
    \begin{align*}
      \left( \frac{2}{3} \right)^L \leq \frac{\delta'}{k},
    \end{align*}
    and choose independently and uniformly at random
    $\iota_1,\ldots,\iota_L\in[100m]$. We build a low degree
    polynomial approximation for the Boolean disjunction
    function. Consider $\eta : \F_q^m \to \F_q$, defined by
    \begin{align*}
      \eta(z_1,\ldots,z_m) = 1 - \prod_{l=1}^{L}\Big(1 - \big(\ECC
      (z_1,\ldots,z_m)_{\iota_l}\big)^{q-1} \Big).
    \end{align*}
    Since $\ECC$ is linear, $\eta$ is a polynomial of degree $O(L
    \cdot q)$ in the variables $z_1, \ldots, z_m$. Observe that the
    linearity and the relative distance of $\ECC$, together with
    Fermat's little theorem implies that for every $(x_1, \ldots, x_m)
    \in \B^m$,
    \begin{align}
      \label{eq:ecc_approx}
      \Pr\left[ \eta(x_1,\ldots,x_m) \neq \bigvee_{i=1}^m x_i \right]
      \leq \left( \frac{2}{3} \right)^L \leq \frac{\delta'}{k}
    \end{align}
    (where the probability is taken over the random choices of
    $\iota_1,\ldots,\iota_L \in [100m]$). Note that we use the same
    polynomial $\eta$ for all of the clauses in $\mathcal{C}$. Thus,
    the total number of coin flips that we use is $\polylog(m, k,
    \delta'^{-1})$. The last step of the construction is defining
    $p_t$ as the composition of the disjunction polynomial $\eta$ and
    the literals in the clause $C_t$.

    Note that applying the approximation procedure that we described
    above to all of the clauses in $\mathcal{C}$, results with a set
    of $k$ polynomials $\{p_t: \F_q^m\to \F_q\}_{t\in[k]}$, where for
    every $t\in[k]$ the degree of $p_t$ is $O(q \log
    \nicefrac{k}{\delta'} )$. We conclude the proof of the lemma by
    noticing that (\ref{eq:ecc_approx}) together with a union bound
    imply that for every $x \in \B^m$,
    \begin{align*}
      \Pr\left[\forall t\in[k] \quad p_t(x) = C_t(x) \right] \geq 1 -
      \delta'
    \end{align*}
    (where the probability is over the random choices of
    $\iota_1,\ldots,\iota_L \in [100m]$).
  \end{proof}

  Observe that by applying Lemma~\ref{lem:circuit_approx} with
  $\delta' = \delta$ and $p$ as the prime number, we can represent
  (\ref{eq:bivariate_representation}) as a summation over a
  polynomial. However, the degree of this polynomial (which is
  dominated by $p$), is too high for our needs. Instead, we
  approximate (\ref{eq:bivariate_representation}) by $O(\log p)$ low
  degree polynomials.

  We start by introducing the necessary notations. Let $Q = \{ q_1,
  \ldots, q_{\rho(c\log p)}\}$ (where $\rho:\N\to\N$ is the prime
  counting function) be the set of all prime numbers that are smaller
  or equal to $c\log p$, where $c$ is a constant such that
  \begin{align*}
    \prod_{q \in Q} q > p.
  \end{align*}

  For every $q\in Q$ denote $\H_q \coloneqq \F_{q^{\lambda_q}}$, where
  $\lambda_q$ is the minimum integer that satisfies $q^{\lambda_q} >
  p$.  Since $q = O(\log p)$, and by the minimality of $\lambda_q$, we
  have $|\H_q| < pq = O(p \log p)$. Furthermore,
  \begin{align}
    \label{eq:fq_representation}
    \widetilde{\mathcal{P}}(\sigma) \pmod{q} = \sum_{x \in
      \mathcal{D}_w(\F_p) } \sum_{y \in \mathcal{D}_s(\F_p) } \psi
    \big(C_1\circ\tilde{\chi}(x,y), \ldots, C_k\circ\tilde{\chi}(x,y)
    \big) \pmod{q}
  \end{align}
  (where we can think of the summation over $\Z$ modulo $q$, as
  summation over $\F_q$). Denote
  \begin{align*}
    \widetilde{\mathcal{P}}_q(\sigma) \coloneqq
    \widetilde{\mathcal{P}}(\sigma) \pmod{q}.
  \end{align*}

  Analogously to the definitions for $\F_p$; for every prime $q\in Q$
  we define efficiently enumerable subsets $\mathcal{D}_s(\H_q)$,
  $\mathcal{D}_w(\H_q)$ of $\H_q$, with cardinality $s,w$
  (respectively), and a one-to-one mapping $\pi_q:[n] \to
  \mathcal{D}_w(\H_q) \times \mathcal{D}_s(\H_q)$. For every $i \in
  [m]$, we can view $\chi_i:[n]\to\B$ as a bivariate polynomial
  $\tilde{\chi}_i^q:\H_q^2 \to \H_q$ of degree $w-1$ in the first
  variable (which we denote by $x$), and degree $s-1$ in the second
  variable (which we denote by $y$), such that for every $j\in[n]$ we
  have $\tilde{\chi}_i^q \circ \pi_q(j) = \chi_i(j)$. Let
  $\tilde{\chi}^q : \H_q^2 \to \H_q^m$ be defined by
  \begin{align*}
    \tilde{\chi}^q(x,y) = \big( \tilde{\chi}^q_1(x,y), \ldots,
    \tilde{\chi}^q_m(x,y) \big).
  \end{align*}
  Moreover, we can think of the multilinear polynomial $\psi:\B^k \to
  \Z$ as a multilinear polynomial $\widetilde{\psi}:\F_p^k \to \F_p$
  (recall that $\psi(x) < B < p$ for every $x \in \B^k$). Let
  $\widetilde{\psi}_q:\F_q^k \to \F_q$ be the polynomial function
  defined by the formal polynomial (i.e., a summation of monomials
  multiplied by coefficients) $\widetilde\psi$, where we take each
  coefficient of $\widetilde\psi$ modulo $q$. Since $\F_q$ is a
  subfield of $\H_q$, we can also view $\widetilde{\psi}_q$ as a
  multilinear polynomial from $\H_q^k$ to $\H_q$.

  Thus, we can express (\ref{eq:fq_representation}) as follows:
  \begin{align}
    \label{eq:hq_representation}
    \widetilde{\mathcal{P}}_q(\sigma) = \sum_{x \in
      \mathcal{D}_w(\H_q) } \sum_{y \in \mathcal{D}_s(\H_q) }
    \widetilde{\psi}_q \big(C_1\circ\tilde{\chi}^q(x,y), \ldots,
    C_k\circ\tilde{\chi}^q(x,y) \big)
  \end{align}
  (where the summation is over $\H_q$, which in this case is equal to
  summation over $\F_q$, hence the modulo $q$).\footnote{Since for
    every $x \in \mathcal{D}_w(\H_q)$ and $y \in \mathcal{D}_s(\H_q)$
    we have $\big(C_1\circ\tilde{\chi^q}(x,y), \ldots,
    C_k\circ\tilde{\chi^q}(x,y) \big) \in \B^k$, then each summand is
    in $\F_q$. Hence we can think of the summation as summation over
    $\F_q$.}

  For every $q \in Q$, we apply Lemma~\ref{lem:circuit_approx} with
  $\delta' = \frac{\delta}{2nc\log p}$, and $q$ as the prime
  number. We get a set of polynomials
  \begin{align*}
    \left\{ p_t: \F_q^m \to \F_q \right\}_{t\in[k]}
  \end{align*}
  (for every $q\in Q$), of degree $O \left( q \log \frac{k n \log
      p}{\delta} \right)$ each, such that for every $x\in \B^m$,
  \begin{align}
    \label{eq:pt_deg}
    \Pr \left[ \forall t\in[k] \quad p_t(x) = C_t(x) \right] \geq 1 -
    \frac{\delta}{2nc \log p}
  \end{align}
  (where the probability is taken over the random coin flips performed
  during the construction of $\{p_t\}_{t\in[k]}$).
  
  Since $\F_q$ is a subfield of $\H_q$, we can view $p_t : \F_q^m \to
  \F_q$ as a polynomial $\widetilde{p_t} : \H_q^m \to \H_q$ (for every
  $t\in[k]$). Then, for every $x\in \F_q^m$ we have
  $\widetilde{p_t}(x) = p_t(x)$. Thus, we get the following set of
  polynomials:
  \begin{align*}
    \left\{\widetilde{p_t} : \H_q^m\to \H_q \right\}_{t\in[k]},
  \end{align*}
  where for every $t\in[k]$, the degree of $\widetilde{p_t}$ is $O
  \left( q \log \frac{k n \log p}{\delta} \right)$.

  Applying a union bound, and using (\ref{eq:pt_deg}) yields:
  \begin{align}
    \label{eq:poly_approx_representation}
    \Pr \left[ \widetilde{\mathcal{P}}_q(\sigma) = \sum_{x \in
        \mathcal{D}_w(\H_q)} \sum_{y \in \mathcal{D}_s(\H_q)}
      \widetilde\psi_q \big(\widetilde{p_1}\circ\tilde{\chi}^q(x,y),
      \ldots, \widetilde{p_k}\circ\tilde{\chi}^q(x,y) \big) \right]
    \geq 1-\frac{\delta}{2c\log p}
  \end{align}
  (where the probability is taken over the random coin flips performed
  during the construction of $\{p_t\}_{t\in[k]}$, and the summation is
  over $\H_q$).\footnote{Again, since for every $x \in
    \mathcal{D}_w(\H_q)$ and $y \in \mathcal{D}_s(\H_q)$ we have
    $\big(\widetilde{p_1}\circ\tilde{\chi}^q(x,y), \ldots,
    \widetilde{p_k}\circ\tilde{\chi}^q(x,y) \big) \in \B^k$, then each
    summand is in $\F_q$. Hence, the summation is modulo $q$.}

  Next, we define the polynomial $\omega_q : \H_q \to \H_q$ by
  \begin{align*}
    \omega_q(x) = \sum_{y \in \mathcal{D}_s(\H_q)} \widetilde\psi_q
    \big(\widetilde{p_1}\circ\tilde{\chi}^q(x,y), \ldots,
    \widetilde{p_k}\circ\tilde{\chi}^q(x,y) \big)
  \end{align*}
  (where the summation is over $\H_q$). Note that for every $t\in[k]$,
  the composition of $\widetilde{p_t}$ and $\tilde{\chi}^q$ is a
  polynomial of degree
  \begin{align*}
    O \left( w q \log \frac{k n \log p}{\delta} \right)
  \end{align*}
  in $x$ (the first variable). Hence, by the multilinearity of
  $\widetilde\psi$,
  \begin{align}
    \label{eq:star}
    \deg(\omega_q) = O \left( w k q \log \frac{k n \log p}{\delta}
    \right).
  \end{align}
  By (\ref{eq:poly_approx_representation}) we have,
  \begin{align}
    \label{eq:mod_representation}
    \Pr \left[ \widetilde{\mathcal{P}}_q(\sigma) = \sum_{x \in
        \mathcal{D}_w(\H_q)} \omega_q(x) \right] \geq
    1-\frac{\delta}{2c\log p}
  \end{align}
  (where the probability is taken over the random coin flips performed
  during the construction of $\{p_t\}_{t\in[k]}$, and the summation is
  over $\H_q$).

  Once we established the above, we can finally describe Merlin's
  proof stream. The proof stream $\varphi$ consists of all the proof
  polynomials $\{\omega_q \}_{q \in Q}$. We send each polynomial by
  its list of coefficients, thus we need at most
  \begin{align*}
    O \Bigg( |Q| w k \log(p) \log \left(\frac{k n \log p}{\delta}
    \right) \cdot \log(p\log p)\Bigg)
  \end{align*}
  bits in order to write down the proof stream. Since $|Q| < c\log p$,
  we conclude:
  \begin{claim}
    \label{clm:proof_size}
    the total size of Merlin's proof stream $\varphi$ is
    \begin{align*}
      O \Big(w k \cdot \polylog \left(m, n, \delta^{-1} \right) \Big).
    \end{align*}
  \end{claim}
  
  Observe that it is possible to reconstruct $\widetilde{P}(\sigma)$
  from the polynomials given in Merlin's proof. We formalize this
  claim as follows:
  \begin{claim}
    \label{clm:tilde_eval}
    Given the set of values $\{ \sum_{x\in \mathcal{D}_w(\H_q)}
    \omega_q(x) \}_{q\in Q}$, it is possible to compute
    $\widetilde{P}(\sigma)$ with probability $1 - \delta/2$ (over the
    random coin tosses that were performed during the construction of
    $\{ \omega_q \}_{q\in Q}$).
  \end{claim}

  \begin{proof}
    Note that for every $q \in Q$ we have
    \begin{align*}
      \Pr \left[ \sum_{x\in \mathcal{D}_w(\H_q)} \omega_q(x) =
        \widetilde{\mathcal P}(\sigma)\pmod{q} \right] \geq 1 -
      \frac{\delta}{2c\log p}.
    \end{align*}
    Hence,
    \begin{align*}
      \Pr \left[\forall q \in Q \quad \sum_{x\in \mathcal{D}_w(\H_q)}
        \omega_q(x) = \widetilde{\mathcal P}(\sigma) \pmod{q} \right]
      \geq 1 - \frac{\delta}{2}.
    \end{align*}
    By the Chinese remainder theorem, given $\{ \widetilde{\mathcal
      P}(\sigma) \pmod{q} \}_{q\in Q}$ we can calculate
    \begin{align*}
      \widetilde{\mathcal P}(\sigma) \pmod{ \prod_{q\in Q} q}.
    \end{align*}
    Since we've chosen $Q$ such that $\prod_{q \in Q} q > p$, the
    claim follows.
  \end{proof}
  
  Another important property of the polynomials $\{\omega_q\}_{q\in
    Q}$ in the proof stream, is that given a sequential, one-pass
  access to the input stream, it is possible to efficiently evaluate
  each polynomial at a specific point. Formally, we show:
  \begin{lemma}
    \label{lem:sub_alg}
    For every $q \in Q$, there exists a streaming algorithm
    $\mathcal{A}_q$ with an access to the common random string $r$,
    such that given a point in the finite field $\xi \in \H_q$, and a
    sequential, one-pass access to the input stream $\sigma$, the
    streaming algorithm $\mathcal{A}_q$ can evaluate $\omega_q(\xi)$
    using $O \big( s k \cdot \polylog(m, n, \delta^{-1}) \big)$ bits
    of space.
  \end{lemma}

  \begin{proof}
    First, recall that the descriptions of $\{ C_t\}_{t\in[k]}$ and
    $\psi$ are known to the verifier. Note that in order to compute
    $\omega_q(\xi)$ it is sufficient to compute and store the values
    of
    \begin{align*}
      \left\{\widetilde{p_t} \big( \tilde{\chi}^q_1(\xi,y), \ldots,
        \tilde{\chi}^q_m(\xi,y) \big) \right\}_{t\in[k], y\in
        \mathcal{D}_s(\H_q)},
    \end{align*}
    where $\{\widetilde{p_t} \}_{t\in[k]}$ are the approximation
    polynomials of the clauses $\{ C_t \}_{t\in[k]}$ over
    $\H_q$. Given these values we can compute
    \begin{align*}
      \left\{ \widetilde\psi_q \Big(\widetilde{p_1} \big(
        \tilde{\chi}^q_1(\xi,y), \ldots, \tilde{\chi}^q_m(\xi,y)
        \big), \ldots, \widetilde{p_k} \big( \tilde{\chi}^q_1(\xi,y),
        \ldots, \tilde{\chi}^q_m(\xi,y) \big) \Big) \right\}_{ y\in
        \mathcal{D}_s(\H_q)}
    \end{align*}
    monomial-by-monomial according to the description of $\psi$, and
    then compute $\omega_q(\xi)$ by summing term-by-term.
    
    Before we describe the algorithm, recall that during the
    construction of $\{ p_t \}_{t\in[k]}$ we defined an error
    correcting code $\ECC:\F_q^m \to \F_q^{100m}$ with relative
    distance $1/3$. Note that since $\ECC$ is a linear function, we
    can extend it (via the linear extension) to $\H_q$. We fixed
    \begin{align*}
      L = O \left( \log k + \log\frac{1}{\delta'} \right) = O \left(
        \log k + \log\frac{n \log p}{\delta} \right),
    \end{align*}
    and chose independently and uniformly $\iota_1, \ldots, \iota_L
    \in[100m]$, using the common random string $r$. Finally we
    approximated each of the $\vee$ gates by the following polynomial,
    \begin{align}
      \label{eq:eta}
      \eta( z_1, \ldots, z_m ) = 1 - \prod_{l=1}^{L} \Big(1 -
      \left(\ECC ( z_1, \ldots, z_m )_{\iota_l} \right)^{q-1} \Big).
    \end{align}

    Note that in order to compute
    \begin{align*}
      \widetilde{p_t} \big( \tilde{\chi}^q_1(\xi,y), \ldots,
      \tilde{\chi}^1_m(\xi,y) \big)
    \end{align*}
    for all $t\in[k]$ and $y\in \mathcal{D}_s(\H_q)$, it is sufficient
    to compute
    \begin{align*}
      \ECC ( \ell^t_1(\xi,y), \ldots, \ell^t_m(\xi,y))_{\iota_l}
    \end{align*}
    (where for every $i \in [m]$ and $t\in[k]$ the value
    $\ell^t_i(\xi,y)$ is either $\tilde{\chi}^q_i(\xi,y)$, or
    $1-\tilde{\chi}^q_i(\xi,y)$, or one of the constants $\B$;
    depending on the clause $C_t$), for all $\iota_l \in \{\iota_1,
    \ldots, \iota_L \}$, $t\in[k]$, and $y\in
    \mathcal{D}_s(\H_q)$. Then we can compute $\widetilde{p_t} \big(
    \tilde{\chi}^q_1(\xi,y), \ldots, \tilde{\chi}^q_m(\xi,y) \big)$
    according to (\ref{eq:eta}).

    Since $\ECC$ is a linear error correcting code, we can compute
    each
    \begin{align*}
      \ECC ( \ell^t_1(\xi,y), \ldots, \ell^t_m(\xi,y))_{\iota_l}
    \end{align*}
    incrementally. That is, we read the data stream $\sigma$
    element-by-element. At each step, when the $i$'th element arrives
    ($i\in[m]$), for every $y \in \mathcal{D}_s(\H_q)$ we compute
    $\tilde{\chi}^q_i(\xi,y)$ according to (\ref{eq:lagrange}), and
    then $\ell^t_i(\xi,y)$ according to the description of $C_t$. By
    the linearity of $\ECC$ we can compute $\ECC ( \ell^t_1(\xi,y),
    \ldots, \ell^t_m(\xi,y))_{\iota_l}$ by incrementally adding each
    \begin{align*}
      \ECC (0,\ldots,0,\ell^t_i(\xi,y), 0, \ldots, 0)_{\iota_1}
    \end{align*}
    at the $i$'th step.

    Observe that during the run over $\sigma$, the entire computation
    is performed element-by-element, and that we used at most $O
    \left( | \mathcal{D}_s(\H_q) | \cdot k \cdot L \cdot \log p
    \right)$ bits of space. Thus the overall space complexity is
    \begin{align*}
      O \Big( s k \cdot \polylog \left(m, n, \delta^{-1}\right) \Big).
    \end{align*}
  \end{proof}

  The last lemma helps us to show that with high probability Merlin
  cannot cheat Arthur by using maliciously chosen proof
  polynomials. We show that by evaluating the actual proof polynomials
  at a randomly chosen point, Arthur can detect a false proof with
  high probability. Formally:
  \begin{lemma}
    \label{lem:verify}
    For every $q \in Q$, given a polynomial $\hat{\omega}_q : \H_q \to
    \H_q$ of degree at most $O \left( w k q \log \frac{k n \log
        p}{\delta} \right)$,\footnote{More precisely, the degree is
      exactly as in \ref{eq:star}.} if $\hat{\omega}_q \neq \omega_q$
    then:
    \begin{align*}
      \Pr[\hat{\omega}_q(\xi) = \omega_q(\xi)] \leq \frac{\delta}{2},
    \end{align*}
    where the probability is taken over uniformly choosing at random
    an element $\xi \in \H_q$.
  \end{lemma}

  \begin{proof}
    Let $\xi$ be an element uniformly chosen from $ \H_q$. By the
    Schwartz-Zippel Lemma, we have
    \begin{align*}
      \Pr[\hat{\omega}_q(\xi) = \omega_q(\xi)] \leq
      \frac{\max\set{\deg(\omega_q), \deg(\hat{\omega}_q)}}{|\H_q|}
      \leq \frac{\delta}{2},
    \end{align*}
    where in order to get the last inequality we fix $p$ to be a
    sufficiently large prime number, of order
    \begin{align*}
      \frac{1}{\delta} \cdot \poly(m,n).
    \end{align*}
  \end{proof}

  Finally, building upon the aforementioned lemmas, we can present the
  $\AM$ algorithm for the approximation of $\mathcal{P}(\sigma)$:

  \newpage
  \begin{algorithm}
    \caption{The Canonical $\AM$ streaming algorithm}
    \small \medskip \noindent

    {\bf The prover (Merlin):}
    \begin{enumerate}
    \item Choose $\iota_1,\ldots,\iota_L\in[100m]$ using the common
      random string $r$.
    \item Construct $\varphi$ that consists of all the proof
      polynomials $\{ \omega_q \}_{q \in Q}$.
    \item Send (via streaming) $\varphi = \{ \omega_q \}_{q \in Q}$ to
      the verifier.
    \end{enumerate}

    {\bf The verifier (Arthur):}
    \begin{enumerate}
    \item For every $q \in Q$, select uniformly at random $\xi_q \in
      \H_q$ (where the selection uses Arthur's private random coin
      tosses).
    \item Read Merlin's proof stream $\varphi = \{ \hat\omega_q \}_{q
        \in Q}$ and (incrementally) compute:
      \begin{enumerate}
      \item $\{ \hat{\omega}_q(\xi_q) \}_{q\in Q}$.
      \item $\left\{ \sum_{x\in \mathcal{D}_w (\H_q)}
          \hat{\omega}_q(x) \right\}_{q \in Q}$.
      \end{enumerate}
    \item Run $\{ \mathcal{A}_q \}_{q \in Q}$ in parallel, in order to
      compute $\{ \omega_q(\xi_q) \}_{q \in Q}$.
    \item If there exists $q\in Q$ for which $ \omega_q(\xi_q) \neq
      \hat{\omega}_q(\xi_q)$, return $\bot$.
    \item Otherwise, use $\left\{ \sum_{x\in \mathcal{D}_w (\H_q)}
        \hat{\omega}_q(x) \right\}_{q \in Q}$ to extract and return
      $\widetilde{\mathcal{P}}(\sigma)$.
    \end{enumerate}
  \end{algorithm}
  
  Last, we show that the aforementioned algorithm is an
  $\AM_{\epsilon,\delta}(S,W)$-streaming algorithm for
  $\mathcal{P}(\sigma)$, where
  \begin{itemize}
  \item $S = O \big( s k \cdot \polylog (m, n, \delta^{-1}) \big)$,
  \item $W = O \big( w k \cdot \polylog (m, n, \delta^{-1}) \big)$.
  \end{itemize}
  
  Indeed, given $\epsilon \ge 0$, $\delta>0$, a common random string
  $r$, and a data stream problem $\mathcal{P}$, our algorithm is a
  probabilistic data stream algorithm (denote it by $\mathcal{A}$),
  which has an oracle access to $r$. The algorithm simultaneously gets
  two streams: an input stream $\sigma$ and a proof stream $\varphi$,
  to both it has a sequential, one pass access. According to
  Claim~\ref{clm:proof_size}:
  \begin{align*}
    W = O \big( w k \cdot \polylog (m, n,\delta^{-1}) \big).
  \end{align*}
  
  As for the space complexity of $\mathcal{A}$, note that
  $\mathcal{A}$ stores $O(\log p)$ random values $\{\xi_q\}_{q\in Q}$
  of size $O(\log p)$ each, which takes $\polylog(m, n, \delta^{-1}))$
  bits of space. In addition it uses $\polylog(m, n, \delta^{-1}))$
  bits of space for computing
  \begin{enumerate}
  \item $\{ \hat{\omega}_q(\xi_q) \}_{q\in Q}$.
  \item $\left\{ \sum_{x\in \mathcal{D}_w (\H_q)} \hat{\omega}_q(x)
    \right\}_{q \in Q}$.
  \end{enumerate}
  Observe that these values can be computed incrementally using a
  sequential, one-pass access to $\varphi$, simply by evaluating the
  polynomials monomial-by-monomial. According to
  Lemma~\ref{lem:sub_alg}, each of the $O(\log p)$ algorithms $\{
  \mathcal{A}_q \}_{q \in Q}$ we run in parallel takes
  \begin{align*}
    O \big( s k \cdot \polylog(m, n, \delta^{-1}) \big)
  \end{align*}
  bits of space. Thus the total space complexity is $S = O \big( s k
  \cdot \polylog(m, n, \delta^{-1}) \big)$.

  Recall that the only time that the algorithm used the common random
  string $r$, is while building the approximation polynomial for the
  disjunction in each $\{ \omega_q \}_{q \in Q}$. Since we constructed
  $|Q|$ such polynomials, and by Lemma~\ref{lem:circuit_approx}, the
  total number of random bits we read from $r$ is $\polylog(m, n,
  \delta^{-1})$. Furthermore, $\mathcal{A}$ also uses only
  $\polylog(m, n, \delta^{-1})$ private random coin tosses, as the
  only randomness it needs is for the selection of random $\xi_q \in
  \H_q$ for every $q \in Q$. Thus, the total randomness complexity of
  the algorithm is $\polylog(m, n, \delta^{-1})$.
  
  We finish the proof by showing the correctness of the algorithm:
  \begin{enumerate}
  \item \textbf{Completeness:} Assuming Merlin is honest, i.e.,
    $\omega_q = \hat{\omega}_q$ for every $q\in Q$; then by
    Claim~\ref{clm:tilde_eval} we can calculate $\widetilde{\mathcal
      P}(\sigma)$ with probability $1 - \delta/2$ over the common
    random string $r$, and by (\ref{eq:bivariate_representation}) we
    have
    \begin{align*}
      (1-\epsilon) \mathcal{P}(\sigma) \leq \widetilde{\mathcal
        P}(\sigma) \leq (1+\epsilon) \mathcal{P}(\sigma).
    \end{align*}
    Hence:
    \begin{align*}
      \Pr \left[ \left|
          \frac{\mathcal{A}(\sigma,\varphi)}{\mathcal{P}(\sigma)} -1
        \right| \le \epsilon \right] \ge 1 - \frac{\delta}{2}
    \end{align*}
  \item \textbf{Soundness}: If Merlin is dishonest, i.e., there exists
    $q \in Q$ for which $\omega_q \neq \hat{\omega}_q$, then by
    Lemma~\ref{lem:verify},
    \begin{align*}
      \Pr[\mathcal{A}(\sigma,\varphi) \neq \bot] \le \frac{\delta}{2},
    \end{align*}
    where the probability is taken over the private random coin tosses
    that $\mathcal{A}$ performs.
  \end{enumerate}
\end{proof}

\section{The MA Communication Complexity of \emph{Gap Hamming
    Distance}}
\label{sec:ghd}
In this section we show that every $\MA$ communication complexity
protocol for the \emph{Gap Hamming Distance} problem ($\GHD$) that
communicates $T$ bits and uses a proof of length $W$, must satisfy $T
\cdot W=\Omega(n)$, and therefore $T+W=\Omega(\sqrt{n})$.

In Section~\ref{sec:distinctelements}, we will use the lower bound on
the $\MA$ communication complexity of $\GHD$ to show a lower bound on
the $\widehat{\MA}$ streaming complexity of the \emph{Distinct
  Elements} problem. We note that the lower bound on the $\MA$
communication complexity of $\GHD$ also implies a lower bound on the
$\widehat{\MA}$ streaming complexity of computing the empirical
entropy of a data stream (see \cite{BCM06} for a formal definition of
the \emph{Empirical Entropy} problem).

For completeness, we show an $\MA$ communication complexity protocol
for $\GHD$ that communicates $O(T \log n)$ bits and uses a proof of
length $O(W \log n)$, for every $T \cdot W \ge n$. Thus we have a
tight bound (up to logarithmic factors) of $T \cdot
W=\tilde\Omega(n).$

\subsection{Lower bound}
\label{sec:ghd_lb}
In order to prove our lower bound on the $\MA$ communication
complexity of \emph{Gap Hamming Distance}, we first show a lower bound
on the $\MA$ communication complexity of \emph{Gap Orthogonality}, a
problem wherein each party gets a vector in $\BM^n$ and needs to tell
whether the vectors are nearly orthogonal, or far from being
orthogonal. We then apply the reduction from the \emph{Gap
  Orthogonality} problem to the \emph{Gap Hamming Distance} problem
(following \cite{She11}), and obtain our lower bound.

Formally, the \emph{Gap Orthogonality} problem is defined as follows:
\begin{definition}
  Let $n$ be an integer, and let $\zeta_0,\zeta_1>0$.  The \emph{Gap
    Orthogonality} problem is the communication complexity problem of
  computing the partial Boolean function $\ORT_{n,\zeta_0,\zeta_1} :
  \BM^n \times \BM^n \to \B$ given by
  \begin{align*}
    \ORT_{n,\zeta_0,\zeta_1}(x,y) =
    \begin{cases}
      1 & if\quad |\langle x,y\rangle| < \zeta_1 \\
      0 & if\quad |\langle x,y\rangle| > \zeta_0 \\
    \end{cases}.
  \end{align*}
  Denote $\ORT = \ORT_{n,\frac{\sqrt{n}}{4},\frac{\sqrt{n}}{8}}$.
\end{definition}

We restate the following theorem from \cite{She11}, which given two
finite sets $X,Y$, guaranties that if the inner product of a random
vector from $X$ and a random vector from $Y$ is highly concentrated
around $0$, then $X\times Y$ must be a small rectangle.
\begin{theorem}
  \label{thm:sherstov}
  Let $\delta > 0$ be a sufficiently small constant, and let $X,Y
  \subseteq \BM^n$ be two sets, such that
  \begin{align*}
    \Pr \left[ |\langle x,y \rangle| > \frac{\sqrt{n}}{4}\right] <
    \delta
  \end{align*}
  (where the probability is taken over selecting independently and
  uniformly at random $x \in X$ and $y \in Y$), then
  \begin{align*}
    4^{-n} |X| |Y| = e^{-\Omega(n)}.
  \end{align*}
\end{theorem}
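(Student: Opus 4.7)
My plan is a Fourier-analytic / discrepancy argument on the hypercube, following the blueprint of Sherstov \cite{She11}. Define the permutation-symmetric, even function $g : \BM^n \to \{0,1\}$ by $g(z) := \mathbf{1}\bigl[\bigl|\sum_i z_i\bigr|\le \sqrt n/4\bigr]$; then the near-orthogonality indicator factors as $\mathbf{1}[|\langle x,y\rangle|\le\sqrt n/4] = g(x \cdot y)$ where $x\cdot y$ is coordinatewise multiplication. Expanding $g$ in the Walsh basis, $g = \sum_S \hat g(S)\chi_S$ with $\hat g(S)$ depending only on $m=|S|$ and nonzero only for even $m$, and using $\chi_S(x\cdot y)=\chi_S(x)\chi_S(y)$, the hypothesis $\Pr_{X\times Y}[|\langle x,y\rangle|\le\sqrt n/4]>1-\delta$ rewrites as
\[
  \hat g(\emptyset)|X||Y| + \sum_{S\ne\emptyset}\hat g(S)\,\widehat{\mathbf 1_X}(S)\,\widehat{\mathbf 1_Y}(S) \;>\; (1-\delta)|X||Y|.
\]
Since $\hat g(\emptyset) = \Pr_{z\sim U}[|\sum_i z_i|\le\sqrt n/4]$ is a universal constant $p\in(0,1)$ bounded away from $1$ (about $0.2$ by the CLT), the non-trivial Fourier sum must have absolute value $\Omega(|X||Y|)$.

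The challenge is to convert this lower bound on a Fourier sum into an upper bound $|X||Y|/4^n \le e^{-\Omega(n)}$. The naive bound $|\sum_{S \ne \emptyset}\hat g(S)\widehat{\mathbf 1_X}(S)\widehat{\mathbf 1_Y}(S)|\le\max_{S\ne\emptyset}|\hat g(S)|\cdot 2^n\sqrt{|X||Y|}$ from Cauchy--Schwarz and Parseval does not suffice, because $\hat g(S)$ can be a non-vanishing constant for small $|S|$ (e.g., $\hat g(\{i,j\})$ is a non-trivial partial sum of a degree-$2$ Krawtchouk polynomial near $k=n/2$). My plan is to split by Hamming weight $m = |S|$ at a threshold $m^\star \asymp cn$: for $m \ge m^\star$, MRRW / stationary-phase estimates on the partial Krawtchouk sum $\hat g_m = 2^{-n}\sum_{k:|n-2k|\le\sqrt n/4}K^n_m(k)$ (using the generating function $\sum_k K^n_m(k)x^k = (1-x)^m(1+x)^{n-m}$) give $|\hat g_m|\le e^{-\Omega(n)}$, while for $m < m^\star$ the Bonami--Beckner level-$k$ inequality $\sum_{|S|=m}\widehat{\mathbf 1_X}(S)^2\le|X|^2\bigl(2e\ln(2^n/|X|)/m\bigr)^m$ controls the Fourier mass of $\mathbf 1_X$ (and symmetrically $\mathbf 1_Y$) in terms of density. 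Combining the two regimes level-by-level should force $|X||Y|/4^n$ to be exponentially small whenever $\delta$ is a sufficiently small absolute constant.

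The main obstacle is calibrating these two ingredients so that they jointly contradict the $\Omega(|X||Y|)$ lower bound. The MRRW-type bound for $\hat g_m$ is only exponentially decaying once $m$ is a constant fraction of $n$, whereas the level-$k$ inequality is only meaningful up to levels $m \lesssim \ln(1/\mu)$ with $\mu = |X|/2^n$; the calibration therefore requires assuming toward contradiction that $|X||Y|/4^n > e^{-c'n}$ for a small $c' \ll c$, so that both $\ln(1/\mu_X)$ and $\ln(1/\mu_Y)$ are $O(c'n)$ and the small-$m$ contribution can be bounded by $|X||Y|\sum_{m\le m^\star}|\hat g_m|(2ec'n/m)^m$, made $o(|X||Y|)$ by tuning $c'$. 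If this Fourier calibration proves too delicate to push through cleanly, I would fall back on Sherstov's original argument in \cite{She11}, which bypasses direct Fourier analysis by a probabilistic sampling / corruption-style argument within the rectangle $X \times Y$ and invokes anti-concentration of sums of Rademacher variables to prune $X$ (or $Y$) to an exponentially small residual subset.
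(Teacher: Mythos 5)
First, a point of comparison: the paper does not prove this statement at all. Theorem~\ref{thm:sherstov} is imported verbatim from Sherstov \cite{She11} (the text says ``we restate the following theorem from \cite{She11}''), so there is no internal proof to match; your attempt has to stand on its own. Your fallback option (``use Sherstov's original argument'') is therefore a citation, not a proof --- which is in effect what the paper itself does.

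Your primary Fourier plan has a genuine gap at the low levels, and it is quantitative, not a matter of delicate calibration. With the normalization $\hat g(S)=\E[g\chi_S]$, the band indicator $g$ has $|\hat g_2|=\Theta(1/n)$: it is a second difference of a probability that varies on scale $\sqrt n$, e.g.\ $\hat g(\{1,2\})=\tfrac14\E\big[g(t'+2)-2g(t')+g(t'-2)\big]\approx -\phi(1/4)/(2n)$. Meanwhile, for a set of density $\mu=2^{-c'n}$ the level-$2$ inequality only gives $\sum_{|S|=2}\hat{\mathbf 1}_X(S)^2\le \mu^2(e\,c'n)^2$, and this is attained up to constants by a subcube of codimension $c'n$. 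So the level-$2$ term of your Cauchy--Schwarz bound is of order $|\hat g_2|\,(c'n)^2=\Theta(c'^2 n)$, which diverges with $n$ for every fixed constant $c'>0$; more generally every level $m\lesssim (c')^2 n$ contributes roughly $\big((2ec')^2 n/m\big)^{m/2}$, so the low-level sum cannot be made $O(1)$, let alone smaller than the constant slack $1-\delta-\hat g(\emptyset)\approx 0.8-\delta$ that you need to beat. Shrinking $c'$ with $n$ (say $c'\sim n^{-1/2}$) repairs the estimate but then only proves $4^{-n}|X||Y|=e^{-\Omega(\sqrt n)}$, far weaker than the claimed $e^{-\Omega(n)}$ and insufficient for the $T\cdot W=\Omega(n)$ tradeoff downstream. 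Since subcubes show the level-wise absolute-value bound is essentially tight for sets of this density, the loss is inherent in the method (it discards all sign information), not in the constants; any successful argument must exploit cancellation or proceed differently, which is precisely why Sherstov's actual proof is not a level-by-level Fourier bookkeeping of this kind. For what it is worth, the high-level half of your plan is fine: for $\alpha n\le m\le(1-\alpha)n$ Parseval already gives $|\hat g_m|\le\binom{n}{m}^{-1/2}=e^{-\Omega(n)}$, and for $m>(1-\alpha)n$ the Krawtchouk reciprocity $K_m(k)=(-1)^kK_{n-m}(k)$ plus the crude bound $|K_{n-m}(k)|\le\binom{n}{n-m}$ suffices --- the problem is confined to, and fatal at, the low levels.
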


Denote the uniform distribution on $\BM^n \times \BM^n$ by $\mu$.  We
get the next immediate corollary of Theorem~\ref{thm:sherstov},
\begin{corollary}
  \label{cor:sherstov}
  There exists a (sufficiently small) constant $\delta > 0$ such that
  for every rectangle $R \subseteq \BM^{n} \times \BM^{n}$ with
  $\mu(R) > 2^{-\delta n}$ we have
  \begin{align*}
    \mu \left( R \cap \ORT^{-1}(0) \right) \ge \delta \mu(R).
  \end{align*}
\end{corollary}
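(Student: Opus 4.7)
The plan is to prove the corollary by the contrapositive, essentially repackaging Theorem~\ref{thm:sherstov} in rectangle-probability language. Let $\delta_0 > 0$ be the sufficiently small constant provided by Theorem~\ref{thm:sherstov}, so that the implication of that theorem holds with parameter $\delta_0$, and let $c > 0$ be the implicit constant inside the $e^{-\Omega(n)}$ conclusion (i.e., $4^{-n}|X||Y| \le e^{-c n}$). I will take $\delta \coloneqq \min\{\delta_0, c/(2\ln 2)\}$, which is a positive constant depending only on the constants produced by Theorem~\ref{thm:sherstov}.

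Now suppose, toward a contradiction, that there exists a rectangle $R = X \times Y \subseteq \BM^n \times \BM^n$ with $\mu(R) > 2^{-\delta n}$ but $\mu(R \cap \ORT^{-1}(0)) < \delta \mu(R)$. Using the definition $\ORT^{-1}(0) = \{(x,y) : |\langle x,y \rangle| > \sqrt{n}/4\}$ and the fact that conditioning on $R$ corresponds to sampling $x \in X$ and $y \in Y$ independently and uniformly, I would rewrite this assumption as
\begin{align*}
\Pr_{x \in X,\, y \in Y}\!\left[\,|\langle x,y\rangle| > \tfrac{\sqrt{n}}{4}\right] = \frac{\mu(R \cap \ORT^{-1}(0))}{\mu(R)} < \delta \le \delta_0.
\end{align*}
Since $\delta \le \delta_0$, the hypothesis of Theorem~\ref{thm:sherstov} is satisfied, so $4^{-n}|X||Y| \le e^{-c n}$. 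But $\mu(R) = |X||Y|/4^n$ because $|\BM^n \times \BM^n| = 4^n$, so this gives $\mu(R) \le e^{-cn} = 2^{-(c/\ln 2)\, n} \le 2^{-2\delta n}$, contradicting the assumption $\mu(R) > 2^{-\delta n}$ (for $n$ at least some absolute constant; small $n$ can be handled by further shrinking $\delta$).

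There is no real obstacle here: the only subtlety is matching up the two constants, namely the threshold $\delta_0$ inside the hypothesis of Theorem~\ref{thm:sherstov} and the decay rate $c$ in its conclusion, and then taking the final $\delta$ to be smaller than both (up to the $\ln 2$ conversion between $e$-base and $2$-base exponents). Everything else is a direct unpacking of the uniform distribution on the rectangle $R$ and the definition of $\ORT$.
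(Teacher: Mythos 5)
Your proof is correct and follows essentially the same route as the paper: assume a rectangle violating the corruption bound, rewrite $\mu(R\cap \ORT^{-1}(0))/\mu(R)$ as the probability that $|\langle x,y\rangle| > \sqrt{n}/4$ over independent uniform $x\in X$, $y\in Y$, note $\mu(R)=4^{-n}|X||Y|$, apply Theorem~\ref{thm:sherstov}, and derive $\mu(R)=e^{-\Omega(n)}$ contradicting $\mu(R)>2^{-\delta n}$. Your treatment is if anything slightly more careful than the paper's, since you explicitly reconcile the threshold constant of the theorem's hypothesis with the decay constant in its conclusion.
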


\begin{proof}
  Assume by contradiction that there exists a rectangle $R \coloneqq X
  \times Y \subseteq \BM^{n} \times \BM^{n}$ with $\mu(R) > 2^{-\delta
    n}$ that satisfies
  \begin{align}
    \label{eq:measure}
    \mu \left( R \cap \ORT^{-1}(0) \right) < \delta \mu (R).
  \end{align}
  Observe that
  \begin{align*}
    \Pr \left[|\langle x,y\rangle| > \frac{\sqrt{n}}{4}\right] =
    \frac{\mu \left( R\cap \ORT^{-1}(0) \right)}{\mu(R)}
  \end{align*}
  (where the probability is taken over selecting independently and
  uniformly at random $x \in X$ and $y \in Y$). Hence we can
  write~(\ref{eq:measure}) as
  \begin{align}
    \label{eq:probability}
    \Pr \left[ |\langle x,y\rangle|> \frac{\sqrt{n}}{4} \right]
    <\delta.
  \end{align}
  Note that
  \begin{align*}
    \mu(R) = \frac{|X|}{2^n} \cdot \frac{|Y|}{2^n} = 4^{-n}|X||Y|.
  \end{align*}
  If we choose $\delta$ to be sufficiently small,
  then~(\ref{eq:probability}) guaranties the precondition of
  Theorem~\ref{thm:sherstov}, and we get that $\mu(R)=e^{-\Omega(n)}$,
  in contradiction to the assumption that $\mu(R)>2^{-\delta n}$.
\end{proof}

In particular, Corollary~\ref{cor:sherstov} implies that every
rectangle $R\subseteq\BM^{n}\times\BM^{n}$ satisfies
\begin{align}
  \label{eq:corruption}
  \mu\left(R\cap \ORT^{-1}(0)\right)\geq\delta \mu(R)-2^{-\delta n}.
\end{align}

Next, using well known techniques (cf. \cite{RS04}), we show a lower
bound on the $\MA$ communication complexity of $\ORT$, relying on
Corollary~\ref{cor:sherstov}. Formally, we prove
\begin{theorem}
  \label{thm:ort}
  Let $\epsilon$ be a positive constant such that
  $\epsilon<\frac{1}{2}$. For every $\MA_\epsilon(T,W)$ communication
  complexity protocol for $\ORT$ we have $T\cdot W=\Omega(n)$, hence
  $T+W=\Omega(\sqrt{n})$.
\end{theorem}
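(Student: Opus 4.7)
The plan is to apply the standard ``pigeonhole plus corruption'' framework for $\MA$ communication complexity lower bounds, in the spirit of \cite{RS04}, with Corollary~\ref{cor:sherstov} playing the role of the corruption bound. The twist that produces a product bound $T \cdot W = \Omega(n)$, rather than a sum bound $T + W = \Omega(n)$, is an initial $\MA$ amplification step: parallel repetition of $\MA$ preserves the proof length but multiplies the communication, so driving the error down to $2^{-\Theta(W)}$ costs a factor of $\Theta(W)$ in communication, and this blowup is exactly what turns an additive corruption bound into a multiplicative one.

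First, I would amplify the hypothesized $\MA_\epsilon(T, W)$ protocol, via $O(W)$ rounds of parallel repetition and majority vote (using the constant gap $1/2 - \epsilon$ and Chernoff), to an $\MA$ protocol with proof length $W$, communication $T' = O(T W)$, and error $\epsilon' \leq 2^{-2W}$. Let $\mu$ denote the uniform distribution on $\BM^n \times \BM^n$, and for each proof $w$ let $G_w \subseteq \ORT^{-1}(1)$ be the set of 1-inputs on which the amplified protocol with proof $w$ accepts with probability at least $1 - \epsilon'$. By completeness of the original protocol together with the amplification, $\bigcup_w G_w \supseteq \ORT^{-1}(1)$; combined with the anti-concentration fact $\mu(\ORT^{-1}(1)) = \Omega(1)$ (the same phenomenon that underlies Theorem~\ref{thm:sherstov}), pigeonhole yields a proof $w^*$ with $\mu(G_{w^*}) \geq \Omega(2^{-W})$.

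Second, I would fix this $w^*$ and average over the verifier's internal randomness. A standard Markov-plus-union-bound argument produces a deterministic restriction $P_{w^*, r^*}$ of the amplified protocol whose accept region has $\mu$-measure at least $\Omega(\mu(G_{w^*}))$ inside $G_{w^*}$ and at most $O(\epsilon')$ inside $\ORT^{-1}(0)$. This accept region decomposes into at most $2^{T'}$ combinatorial rectangles. Call a rectangle $R$ \emph{good} if $\mu(R \cap \ORT^{-1}(0)) < \delta \cdot \mu(R)$ and \emph{bad} otherwise, where $\delta$ is the constant of Corollary~\ref{cor:sherstov}. For any bad $R$ we have $\mu(R \cap G_{w^*}) \leq \mu(R) \leq \mu(R \cap \ORT^{-1}(0)) / \delta$, so the total $G_{w^*}$-coverage contributed by bad rectangles is at most $O(\epsilon'/\delta) = O(2^{-2W})$, which is dominated by $\mu(G_{w^*}) \geq \Omega(2^{-W})$. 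Hence good rectangles cover $\Omega(\mu(G_{w^*}))$ of $G_{w^*}$, and pigeonholing over the at most $2^{T'}$ good rectangles produces a single good rectangle $R^*$ with
\begin{align*}
  \mu(R^*) \;\geq\; \mu(R^* \cap G_{w^*}) \;\geq\; \Omega\bigl(2^{-W - T'}\bigr).
\end{align*}

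Third, invoking Corollary~\ref{cor:sherstov} on $R^*$: since $R^*$ is good, its 0-input fraction is strictly less than $\delta$, so the corollary forces $\mu(R^*) \leq 2^{-\delta n}$. Combining the two bounds on $\mu(R^*)$ gives $W + T' \geq \delta n - O(1)$, and substituting $T' = O(T W)$ (under the harmless assumption $T \geq 1$) yields $T \cdot W = \Omega(n)$, from which $T + W = \Omega(\sqrt{n})$ follows by AM--GM. The main technical obstacle is the interplay between $\epsilon'$ and $\mu(G_{w^*})$: the amplified error must be pushed safely below $\mu(G_{w^*}) = \Omega(2^{-W})$ so that bad rectangles contribute negligibly, which is precisely why the amplification is calibrated to error $2^{-\Theta(W)}$ at the cost of communication $O(T W)$. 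This amplification cost is where the product form of the bound is paid for, and getting the accounting right is the crux of the proof.
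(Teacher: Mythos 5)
Your proposal is correct and follows essentially the same route as the paper: amplify the $\MA$ protocol to error $2^{-\Theta(W)}$ at the cost of $O(TW)$ communication, find by averaging a proof that works for an $\Omega(2^{-W})$ measure of 1-inputs (using that $\ORT^{-1}(1)$ has constant measure), and then invoke the corruption bound of Corollary~\ref{cor:sherstov} against the at most $2^{O(TW)}$ transcript rectangles. The only (equivalent) difference is the endgame: you additionally fix the verifier's randomness and pigeonhole down to a single large corruption-free rectangle whose measure the corollary caps at $2^{-\delta n}$, whereas the paper keeps the randomness and derives a direct contradiction with the amplified soundness by lower-bounding the measure of accepted 0-inputs.
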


\begin{proof}
  Fix $n$. Denote $\mathcal{R} = \BM^n \times \BM^n$. Assume that
  there exists an $\MA_\epsilon(T,W)$ communication complexity
  protocol for $\ORT$; denote it by $\mathcal{P}$. By a simple
  amplification argument we get that there exists an
  $\MA_{\epsilon'}(k,W)$ communication complexity protocol for $\ORT$,
  where $k=O(T\cdot W)$ and $\epsilon' = 2^{-CW}$ (for an arbitrary
  large constant $C$); denote it by $\mathcal{P'}$.

  Assume by contradiction that $k=o(n)$. We will show that our
  assumption that $k$ is asymptotically smaller than $n$ implies that
  the error probability of $\mathcal{P'}$ is greater than $2^{-CW}$,
  in contradiction.

  Denote Merlin's proof, a binary string of size at most $W$ bits, by
  $w$. Denote the random string that $\mathcal{P'}$ uses by $s$.
  Denote by $R_{s,w,h}\subseteq \mathcal{R}$ the set of all input
  pairs $(x,y)\in \mathcal{R}$ such that the history of $(x,y,s,w)$ is
  $h$.\footnote{For any input pair $(x,y) \in \mathcal{R}$ and any
    assignment $s$ to the random string of $\mathcal{P'}$ and any
    assignment $w$ to the proof supplied to the players, the string of
    communication bits exchanged by the two players on the inputs $(x,
    y)$, using the random string $s$ and the proof $w$, is called the
    history of $(x, y, s, w)$.} We state the following Lemma from
  \cite{RS04}:
  \begin{lemma}
    \label{lem:rs}
    For every $s,w,h$ we have $R_{s,w,h}=X_{s,w,h}\times Y_{s,w,h}$
    (where $X_{s,w,h}\subseteq \BM^n$ and $Y_{s,w,h}\subseteq \BM^n$),
    and for every $s,w$ the family $\{ R_{s,w,h} \}_{h\in\B^{k}}$ is a
    partition of $\mathcal{R}$.
  \end{lemma}
  Denote the answer that $\mathcal{P'}$ gives on $(x,y,s,w)$ by
  $\mathcal{P'}(x,y,s,w)$. Since the answer of $\mathcal{P'}$ on
  inputs in $R_{s,w,h}$ does not depend on $x$ and $y$, then for every
  input pair in $R_{s,w,h}$ the answer $\mathcal{P'}(x,y,s,w)$ is the
  same; denote it by $\mathcal{P'}(s,w,h)$. Next, define
  $H_{0}\subseteq \mathcal{R}$ to be the set of all input pairs
  $(x,y)\in \mathcal{R}$ such that
  \begin{align*}
    |\langle x,y\rangle| > \frac{\sqrt{n}}{4},
  \end{align*}
  and define $H_{1}\subseteq \mathcal{R}$ to be the set of all input
  pairs $(x,y)\in \mathcal{R}$ such that
  \begin{align*}
    |\langle x,y \rangle| < \frac{\sqrt{n}}{8}.
  \end{align*}
  Note that if we choose $x = (x_1, \ldots, x_n) \in \BM^n$ and $y =
  (y_1, \ldots, y_n) \in \BM^n$ independently and uniformly at random,
  then for every $i \in [n]$ the product $x_i\cdot y_i$ is also
  uniformly distributed. Thus, if we choose $z = (z_1,\ldots,z_n) \in
  \BM^n$ uniformly at random, then
  \begin{align}
    \label{eq:stirling}
    \mu(H_{1}) = \Pr_{(x,y)\in\mathcal{R}} \left[ |\langle x,y
      \rangle| < \frac{\sqrt{n}}{8} \right] = \Pr_{z\in
      \BM^n}\left[\left| \sum_{i=1}^n z_i \right| < \frac{\sqrt{n}}{8}
    \right] \ge c,
  \end{align}
  for some universal constant $c$.
  
  Next, for every rectangle $R\subseteq \mathcal{R}$, denote by
  $\alpha(R)$ the measure of $R$ in $\mathcal{R}$. Denote by
  $\beta_{0}(R)$ the measure of $R\cap H_{0}$ in $H_{0}$, and denote
  by $\beta_{1}(R)$ the measure of $R\cap H_{1}$ in $H_{1}$. Under
  these notations, we see that (\ref{eq:corruption}) implies that
  there exists a universal constant $\delta > 0$ such that for any
  rectangle $R\subseteq \mathcal{R}$ we have
  \begin{align*}
    \beta_{0}(R)\geq\delta \cdot\alpha(R) - 2^{-\delta n}.
  \end{align*}
  According to Equation~\ref{eq:stirling}, we know that $H_{1}$ is a
  set of probability at least $c$ in $\mathcal{R}$. Hence for every
  rectangle $R \subseteq \mathcal{R}$ we have $\beta_{1}(R) \leq
  \nicefrac{1}{c} \cdot \alpha(R)$. Therefore we have the following
  corollary,
  \begin{corollary}
    \label{cor:measure}
    There exist universal constants $\delta, \delta'>0$ such that
    every rectangle $R\subseteq \mathcal{R}$ satisfies
    \begin{align*}
      \beta_{0}(R)\geq\delta' \cdot\beta_{1}(R) - 2^{-\delta n}.
    \end{align*}
  \end{corollary}

  For any $s,w$, denote by $A_{0}(s,w) \subseteq \mathcal{R}$ the
  union of all sets $R_{s,w,h}$ such that $\mathcal{P'}(s,w,h) = 0$,
  and denote by $A_{1}(s,w) \subseteq \mathcal{R}$ the union of all
  sets $R_{s,w,h}$ such that $\mathcal{P'}(s,w,h) = 1$. Observe that
  $A_{0}(s,w)$ and $A_{1}(s,w)$ are disjoint, and that $A_{0}(s,w)
  \cup A_{1}(s,w) = \mathcal{R}$.

  Since each of $A_{0}(s,w)$ and $A_{1}(s,w)$ is a union of at most
  $2^{k}$ of the sets $X_{s,w,h} \times Y_{s,w,h}$, we see that
  Corollary~\ref{cor:measure} implies
  \begin{align}
    \label{eq:false_positive}
    \beta_{0}(A_{1}(s,w)) \ge \delta' \cdot\beta_{1}(A_{1}(s,w))-
    2^{k} \cdot 2^{-\delta n} \ge \delta'
    \cdot\beta_{1}(A_{1}(s,w))-o(2^{-W}).
  \end{align}

  Recall that $\beta_{1}( A_{1}(s,w) )$ is the fraction of inputs
  $(x,y)$ in $H_{1}$ such that $\mathcal{P'}(x,y,s,w) = 1$, and that
  $H_{1}$ is the set of ones of the problem. Thus for every input
  $(x,y)$ in $H_{1}$ there exists $w$ such that $(x,y)\in A_{1}(s,w)$
  with probability of at least $(1-\epsilon')$ over $s$. Since the
  number of possible proofs $w$ is at most $2^{W}$, by an averaging
  argument we get that there exists a proof that corresponds to at
  least $2^{-W}$ fraction of the inputs in $H_{1}$. Formally speaking,
  there exists at least one binary string $w$ of size at most $W$, and
  a set $H_1' \subseteq H_1$ that satisfies $\beta_1(H_1')\geq
  2^{-W}$, such that that for every $(x,y)\in H_1'$,
  \begin{align*}
    \Pr_{s}[(x,y)\in A_1(s,w)] \ge 1-\epsilon'.
  \end{align*}
  Therefore, there exists a constant $c_0$, such that with constant
  probability (over the random string $s$),
  \begin{align*}
    \beta_{1}(A_{1}(s,w)) > 2^{-(W+c_0)}.
  \end{align*}
  Hence, by (\ref{eq:false_positive}), with constant probability (over
  the random string $s$),
  \begin{align*}
    \beta_{0}( A_{1}(s,w) ) \ge \delta' \cdot 2^{-W-c_0} -o(2^{-W})
    \ge c_1 \delta' \cdot 2^{-W}.
  \end{align*}
  for some constant $c_1$. However, recall that
  $\beta_{0}(A_{1}(s,w))$ is the fraction of inputs $(x,y)$ in $H_{0}$
  for which $\mathcal{P'}(x,y,s,w) $ returns $1$. Thus there exists a
  constant $c_2$ such that,
  \begin{align*}
    \Pr \left[ \mathcal{P'}(x,y,s,w) = 1 \right] \geq c_2 \delta'
    \cdot 2^{-W}
  \end{align*}
  (where the probability is taken over both the random string $s$, and
  the uniform selection of $(x,y)\in H_{0}$). But $H_{0}$ is the set
  of zeros of the problem, so for every $(x,y)\in H_{0}$ the protocol
  answers $1$ with probability at most $\epsilon' \le 2^{-CW}$ (for an
  arbitrary large constant $C$), which is a contradiction.
\end{proof}

We established that for every $\MA_\epsilon(T,W)$ communication
complexity protocol for $\ORT$ we have $T\cdot W=\Omega(n)$. According
to the duplication argument in \cite{She11}, Theorem~\ref{thm:ort}
implies the following corollary for slightly different parameters of
the orthogonality problem.
\begin{corollary}
  \label{cor:ort}
  Let $\epsilon$ be a positive constant such that
  $\epsilon<\frac{1}{2}$. For every $\MA_\epsilon(T,W)$ communication
  complexity protocol for $\ORT_{n,2\sqrt{n},\sqrt{n}}(x,y)$ we have
  $T\cdot W=\Omega(n)$, hence $T+W=\Omega(\sqrt{n})$.
\end{corollary}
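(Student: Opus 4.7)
The plan is to obtain Corollary~5.5 from Theorem~5.4 by the standard coordinate-duplication reduction (as in [She11]). The idea is to show that any protocol for the wider-gap problem $\ORT_{N,2\sqrt{N},\sqrt{N}}$ on inputs of dimension $N$ already yields a protocol for the tighter-gap problem $\ORT = \ORT_{n,\sqrt{n}/4,\sqrt{n}/8}$ on inputs of dimension $n = N/64$, with no overhead whatsoever in communication, in proof length, or in error.

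Concretely, given inputs $x,y \in \BM^{n}$, Alice and Bob would locally form $x',y' \in \BM^{64n}$ by repeating each coordinate $64$ times, i.e.\ $x'_{64(i-1)+j} = x_i$ and $y'_{64(i-1)+j} = y_i$ for every $i \in [n]$ and $j \in [64]$. This local transformation uses no communication and is independent of Merlin's proof, and by construction satisfies the key identity $\langle x', y' \rangle = 64 \langle x,y\rangle$.

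Next I would check that this duplication sends the promise of $\ORT$ correctly into the promise of $\ORT_{N,2\sqrt{N},\sqrt{N}}$ with $N := 64n$. For a $1$-input of $\ORT$ we have $|\langle x,y\rangle| < \sqrt{n}/8$, so $|\langle x',y'\rangle| < 64\sqrt{n}/8 = 8\sqrt{n} = \sqrt{N}$, which is exactly the $1$-condition of $\ORT_{N,2\sqrt{N},\sqrt{N}}$. For a $0$-input of $\ORT$ we have $|\langle x,y\rangle| > \sqrt{n}/4$, so $|\langle x',y'\rangle| > 64\sqrt{n}/4 = 16\sqrt{n} = 2\sqrt{N}$, matching the $0$-condition. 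Thus $(x',y')$ always lies in the promise and carries the same label as $(x,y)$.

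Combining these steps, any $\MA_\epsilon(T,W)$ protocol for $\ORT_{N,2\sqrt{N},\sqrt{N}}$ induces, via the above local preprocessing, an $\MA_\epsilon(T,W)$ protocol for $\ORT$ on $n = N/64$ coordinates with identical $T$, $W$, and error probability. Theorem~5.4 then forces $T \cdot W = \Omega(n) = \Omega(N)$, which is the desired bound after renaming $N$ back to $n$; the additive consequence $T + W = \Omega(\sqrt{n})$ is immediate. The only mildly delicate point is the constant arithmetic that pins the duplication factor at exactly $64$ (it is the unique value for which both the $1$-threshold and the $0$-threshold translate correctly); everything else follows automatically from the locality of the reduction and the fact that it introduces no additional randomness.
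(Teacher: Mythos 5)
Your proposal is correct and is essentially the paper's own argument: the paper derives this corollary by invoking the coordinate-duplication argument of [She11], which is exactly the factor-$64$ duplication $\langle x',y'\rangle = 64\langle x,y\rangle$ you carry out, mapping the promise of $\ORT_{n,\sqrt{n}/4,\sqrt{n}/8}$ into that of $\ORT_{N,2\sqrt{N},\sqrt{N}}$ with $N=64n$ and no overhead in communication, proof length, or error. Your threshold arithmetic and the transfer of the $\Omega(n)=\Omega(N)$ bound from Theorem~\ref{thm:ort} are exactly what the cited duplication argument provides.
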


Next, we state the following reduction from \cite{She11} (repharsed):
\begin{lemma}
  \label{lem:sherstov_reduction}
  Let $n\in \N$ be a perfect square. For every input $x\in \BM^{n}$
  denote by $x^{m}$ ($m\in \N$) the string of length $n \cdot m$ that
  is composed of $x$ concatenated to itself $m-1$ times. Then, for
  every $(x,y)\in \ORT_{n,2\sqrt{n},\sqrt{n}}^{-1}(0) \cup
  \ORT_{n,2\sqrt{n},\sqrt{n}}^{-1}(1)$ we have
  \begin{eqnarray*}
    \ORT_{n,2\sqrt{n},\sqrt{n}}(x,y) & = & \neg\GHD_{10n+15\sqrt n,
      \sqrt n, \sqrt n}\left(x^{10}(-1)^{15\sqrt n},y^{10}(+1)^{15\sqrt n}\right) \\
    & & \wedge\: \GHD_{10n+15\sqrt n, \sqrt n, \sqrt n}\left(x^{10}(+1)^{15\sqrt n},y^{10}(+1)^{15\sqrt n}\right).
  \end{eqnarray*}
\end{lemma}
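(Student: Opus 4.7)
The plan is a direct arithmetic verification via case analysis on the $\ORT$ promise; no clever machinery is needed. First I set $I \coloneqq \langle x,y\rangle$. Since $x^{10}$ denotes $x$ concatenated with itself to form a string of length $10n$, one has $\langle x^{10}, y^{10}\rangle = 10 I$; and since the appended block $(\pm 1)^{15\sqrt n}$ paired against $(+1)^{15\sqrt n}$ contributes exactly $\pm 15\sqrt n$ to the inner product, I immediately compute
\begin{align*}
  \langle x^{10}(+1)^{15\sqrt n},\, y^{10}(+1)^{15\sqrt n}\rangle &= 10I + 15\sqrt n,\\
  \langle x^{10}(-1)^{15\sqrt n},\, y^{10}(+1)^{15\sqrt n}\rangle &= 10I - 15\sqrt n.
\end{align*}
(The hypothesis that $n$ is a perfect square is used here only to guarantee that $15\sqrt n$ is an integer, so the padded strings are legitimate.)

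Next I split according to the two sides of the $\ORT_{n,2\sqrt n,\sqrt n}$ promise. If $\ORT(x,y)=1$, then $|I|<\sqrt n$, so $10I+15\sqrt n\in(5\sqrt n,\,25\sqrt n)$ and $10I-15\sqrt n\in(-25\sqrt n,\,-5\sqrt n)$. Both values lie safely outside the gap $[-\sqrt n,\sqrt n]$, so the $\GHD_{10n+15\sqrt n,\sqrt n,\sqrt n}$ promise is met; the first string pair yields $\GHD=1$ and the second yields $\GHD=0$. Hence $\neg\GHD_{1}\wedge\GHD_{2}=1\wedge 1=1$, matching $\ORT(x,y)$.

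If $\ORT(x,y)=0$, then $|I|>2\sqrt n$, and I consider the two signs separately. When $I>2\sqrt n$ one has $10I-15\sqrt n>5\sqrt n>\sqrt n$ (and simultaneously $10I+15\sqrt n>35\sqrt n>\sqrt n$), so $\GHD_{1}=1$ and the conjunction is $0\wedge 1=0$. When $I<-2\sqrt n$ one has $10I+15\sqrt n<-5\sqrt n<-\sqrt n$ (and $10I-15\sqrt n<-35\sqrt n<-\sqrt n$), so $\GHD_{2}=0$ and the conjunction is $1\wedge 0=0$. In both subcases the $\GHD$ promise is satisfied by the explicit bounds, and the right-hand side equals $0=\ORT(x,y)$.

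There is essentially no obstacle: the content of the lemma is just that the padding lengths $10$ and $15\sqrt n$ are chosen so that the $\ORT$ gap between $\sqrt n$ and $2\sqrt n$ survives the shift by $\pm 15\sqrt n$ with a slack of $5\sqrt n$, which exceeds the $\GHD$ threshold $\sqrt n$. If anything, the only thing worth flagging is to remember to verify the $\GHD$ promise in every subcase, so that the truth-table identity $\ORT=\neg\GHD_{1}\wedge\GHD_{2}$ is meaningful as stated; the inequalities above do this automatically.
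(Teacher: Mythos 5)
Your proof is correct: the paper itself does not prove this lemma (it simply quotes the reduction from Sherstov's work), and your direct computation of the shifted inner products $10\langle x,y\rangle \pm 15\sqrt n$ together with the case analysis on the $\ORT$ promise is exactly the standard argument behind it. You also correctly verify that the $\GHD_{10n+15\sqrt n,\sqrt n,\sqrt n}$ promise is met in every subcase (the slack $5\sqrt n$ always exceeds the threshold $\sqrt n$), which is the only point where a sloppy write-up could go wrong, and the perfect-square hypothesis is used precisely as you say, to make $15\sqrt n$ an integer.
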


Note that due to the symmetry of the \emph{gap Hamming distance}
problem, a protocol for $\GHD_{10n+15\sqrt n, \sqrt n, \sqrt n}$
implies a protocol for $\neg\GHD_{10n+15\sqrt n, \sqrt n, \sqrt
  n}$. Hence, if we assume by contradiction that there exists an
$\MA_\epsilon(T,W)$ communication complexity protocol for
$\GHD_{10n+15\sqrt n, \sqrt n, \sqrt n}$, where $0<\epsilon<\frac14$
and $T \cdot W=o(n)$ (which in turn implies that there exists an
$\MA_\epsilon(T,W)$ communication complexity protocol for
$\neg\GHD_{10n+15\sqrt n, \sqrt n, \sqrt n}$, where
$0<\epsilon<\frac14$ and $T \cdot W=o(n)$), then by applying
Lemma~\ref{lem:sherstov_reduction} we get an $\MA_{2\epsilon}(T,W)$
communication complexity protocol for
$\ORT_{n,2\sqrt{n},\sqrt{n}}(x,y)$ such that $T\cdot W=o(n)$, in
contradiction to Corollary~\ref{cor:ort}. Thus we get the following
corollary,
\begin{corollary}
  Let $\epsilon$ be a positive constant, such that
  $\epsilon<\frac{1}{4}$. For every $\MA_\epsilon(T,W)$ communication
  complexity protocol for $\GHD_{10n+15\sqrt n, \sqrt n, \sqrt n}$ we
  have $T\cdot W=\Omega(n)$, hence $T+W=\Omega(\sqrt{n}).$
\end{corollary}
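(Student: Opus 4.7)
The plan is to derive the corollary by contradiction, using Lemma~\ref{lem:sherstov_reduction} to convert a hypothetical efficient $\MA$ protocol for $\GHD_{10n+15\sqrt{n},\sqrt{n},\sqrt{n}}$ into one for $\ORT_{n,2\sqrt{n},\sqrt{n}}$, and then invoking Corollary~\ref{cor:ort}. Concretely, suppose for contradiction there is an $\MA_{\epsilon}(T,W)$ protocol $\mathcal{Q}$ for $\GHD_{10n+15\sqrt{n},\sqrt{n},\sqrt{n}}$ with $\epsilon<1/4$ and $T\cdot W=o(n)$. I first observe that $\GHD$ is symmetric under complementation of one party's input (or equivalently, the output gap is symmetric around $n/2$), so flipping $\mathcal{Q}$'s output yields an $\MA_{\epsilon}(T,W)$ protocol for $\neg\GHD_{10n+15\sqrt{n},\sqrt{n},\sqrt{n}}$ at the same cost.

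Next, I would build an $\MA$ protocol $\mathcal{Q}'$ for $\ORT_{n,2\sqrt{n},\sqrt{n}}$ as follows. On input $(x,y)\in\BM^{n}\times\BM^{n}$, Alice and Bob locally construct the two pairs of padded, duplicated strings appearing in Lemma~\ref{lem:sherstov_reduction}; no communication is required for this step because each party's transformed string depends only on its own input. Merlin sends a proof of length $2W$, namely the concatenation of the two proofs he would send for the two GHD invocations. Alice and Bob then run $\mathcal{Q}$ (negated) on the first transformed pair and $\mathcal{Q}$ on the second, using at most $2T$ bits of communication in total, and output the AND of the two answers, which equals $\ORT_{n,2\sqrt{n},\sqrt{n}}(x,y)$ by Lemma~\ref{lem:sherstov_reduction}.

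By a union bound over the two invocations, $\mathcal{Q}'$ is an $\MA_{2\epsilon}(2T,2W)$ protocol for $\ORT_{n,2\sqrt{n},\sqrt{n}}$ with $2\epsilon<1/2$, and $(2T)\cdot(2W)=4\,T W=o(n)$, directly contradicting Corollary~\ref{cor:ort}. The additive bound $T+W=\Omega(\sqrt{n})$ then follows from $T\cdot W=\Omega(n)$ by AM--GM. The main thing to double-check is the bookkeeping of the error analysis: that a single Merlin message of length $2W$ suffices (since the two sub-proofs can be concatenated), and that on YES instances of $\ORT$ the two verifier answers are each correct with probability $1-\epsilon$ for a common proof, while on NO instances at least one of the two underlying GHD/$\neg$GHD queries is in its NO-case and is rejected except with probability $\epsilon$ for every candidate proof. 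Apart from this routine union-bound bookkeeping, the argument is an assembly of results already established in the excerpt.
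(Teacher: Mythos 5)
Your proposal is correct and follows essentially the same route as the paper: assume for contradiction an $\MA_\epsilon(T,W)$ protocol for $\GHD_{10n+15\sqrt n,\sqrt n,\sqrt n}$ with $T\cdot W=o(n)$, use the symmetry of the problem to obtain one for its negation, combine the two via Lemma~\ref{lem:sherstov_reduction} into an $\MA_{2\epsilon}$ protocol for $\ORT_{n,2\sqrt n,\sqrt n}$, and contradict Corollary~\ref{cor:ort}. Your explicit bookkeeping of the doubled costs $(2T,2W)$ and the completeness/soundness union bound is just the routine detail the paper elides, and it does not change the asymptotics.
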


Finally, we note that in previous work \cite{CR11} provided a toolkit
of simple reductions that can be used to generalize a lower bound on
the communication complexity of \emph{gap Hamming distance} for every
reasonable parameter settings. Specifically, a lower bound for
$\GHD_{10n+15\sqrt n, \sqrt n, \sqrt n}$ implies a lower bound for
$\GHD = \GHD_{n, \sqrt n, \sqrt n}$. Moreover, we note that their
reduction is directly robust for $\MA$ communication complexity; thus
we conclude,
\begin{theorem}
  \label{thm:ghd_lb}
  Let $\epsilon$ be a positive constant, such that
  $\epsilon<\frac{1}{4}$. For every $\MA_\epsilon(T,W)$ communication
  complexity protocol for $\GHD$ we have $T\cdot W=\Omega(n)$, hence
  $T+W=\Omega(\sqrt{n})$.
\end{theorem}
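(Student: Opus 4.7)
The plan is to derive the bound by chaining together the lower bound on $\MA$ communication complexity of $\ORT$ (established in Corollary~\ref{cor:ort}), Sherstov's AND-of-two-$\GHD$-instances reduction (Lemma~\ref{lem:sherstov_reduction}), and the parameter-tuning reductions of \cite{CR11}. Concretely, I would argue the contrapositive: assume an $\MA_\epsilon(T,W)$ protocol $\mathcal{Q}$ for $\GHD_{10n+15\sqrt{n},\sqrt{n},\sqrt{n}}$ with $\epsilon < 1/4$ and $T\cdot W = o(n)$, and show this yields a protocol for $\ORT_{n,2\sqrt{n},\sqrt{n}}$ that contradicts Corollary~\ref{cor:ort}.

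First, since $\GHD$ is symmetric under swapping the accept/reject thresholds (simply negating one party's input flips the inner product's sign), $\mathcal{Q}$ also provides an $\MA_\epsilon(T,W)$ protocol for $\neg\GHD_{10n+15\sqrt{n},\sqrt{n},\sqrt{n}}$ with the same parameters. On an instance $(x,y)$ of $\ORT_{n,2\sqrt{n},\sqrt{n}}$, Alice and Bob each locally construct the padded strings $x^{10}(\pm 1)^{15\sqrt{n}}$ and $y^{10}(+1)^{15\sqrt{n}}$ specified in Lemma~\ref{lem:sherstov_reduction}. Merlin sends a single proof that concatenates the two $\GHD$ proofs (of total length $2W = O(W)$); Alice and Bob then run both $\GHD$ protocols and output the AND of the two answers. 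By Lemma~\ref{lem:sherstov_reduction}, this correctly computes $\ORT_{n,2\sqrt{n},\sqrt{n}}(x,y)$ whenever it is defined. A union bound on the two invocations shows completeness and soundness error at most $2\epsilon < 1/2$. The total communication is $2T = O(T)$ and the proof length is $O(W)$, so we obtain an $\MA_{2\epsilon}(O(T),O(W))$ protocol for $\ORT_{n,2\sqrt{n},\sqrt{n}}$ with $T \cdot W = o(n)$, contradicting Corollary~\ref{cor:ort}.

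This gives $T\cdot W = \Omega(n)$ for $\GHD_{10n+15\sqrt{n},\sqrt{n},\sqrt{n}}$. To finish, I would invoke the simple parameter-manipulation reductions from \cite{CR11}. Those reductions translate between $\GHD$ problems with different instance sizes and gap thresholds via padding, duplication, and input transformations that can all be performed locally by Alice and Bob without any additional communication or proof bits. Applying them backwards turns a hypothetical $\MA$ protocol for $\GHD = \GHD_{n,\sqrt{n},\sqrt{n}}$ with $T\cdot W = o(n)$ into an $\MA$ protocol for $\GHD_{10n'+15\sqrt{n'},\sqrt{n'},\sqrt{n'}}$ on a proportional input size $n'=\Theta(n)$ with asymptotically the same $T\cdot W = o(n')$, contradicting what we just proved.

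The main delicate point is confirming that the \cite{CR11} reductions are genuinely robust in the $\MA$ setting: they must preserve not only communication cost but also soundness against a malicious Merlin. Since they are local, deterministic transformations of the inputs, Merlin's proof in the reduced protocol can simply be forwarded as the proof of the original protocol and soundness transfers verbatim; this is the observation the authors appeal to. The only other care required is in the union-bound step for the AND-of-two-$\GHD$-calls reduction, where one must verify that reusing independent randomness (and possibly the same proof string) keeps the error below $1/2$, which is guaranteed by our choice $\epsilon<1/4$.
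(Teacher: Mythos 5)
Your proposal is correct and follows essentially the same route as the paper: assume an $\MA_\epsilon(T,W)$ protocol for $\GHD_{10n+15\sqrt n,\sqrt n,\sqrt n}$ with $T\cdot W=o(n)$, use the symmetry of $\GHD$ and Lemma~\ref{lem:sherstov_reduction} to build an $\MA_{2\epsilon}$ protocol for $\ORT_{n,2\sqrt n,\sqrt n}$ contradicting Corollary~\ref{cor:ort}, and then transfer the bound to $\GHD=\GHD_{n,\sqrt n,\sqrt n}$ via the \cite{CR11} parameter-manipulation reductions, whose $\MA$-robustness you justify just as the paper asserts. The extra care you take with the union bound and with forwarding Merlin's proof through the local reductions simply fills in details the paper leaves implicit.
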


\subsection{Upper bound}
In their seminal paper, Aaronson and Widgerson \cite{AW09} showed an
$\MA$ communication complexity protocol for the disjointness problem,
wherein the communication complexity is $O(\sqrt n \log n)$, and the
size of the proof is also $O(\sqrt n \log n)$.

We modify their protocol in order to show an $\MA$ communication
complexity protocol for $\GHD$, wherein the communication complexity
is $O(T \log n)$, and the size of the proof is $O(W \log n)$, for
every $T \cdot W \ge n$.

\begin{theorem}
  Let $T,W\in\N$ such that $T \cdot W \ge n$. Then, there exists an
  explicit $\MA_{1/3} (T \log n, W \log n)$ communication complexity
  protocol for $\GHD$.
\end{theorem}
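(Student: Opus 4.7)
The plan is to adapt the Aaronson-Wigderson arithmetization protocol for disjointness to the bilinear inner product $\langle x,y\rangle$, trading communication for proof length along a rectangular $W\times T$ grid rather than the square $\sqrt{n}\times\sqrt{n}$ grid used in \cite{AW09}. Pad $x,y\in\BM^n$ with zeros if necessary and identify $[n]$ with a subset of $[W]\times[T]$; by hypothesis $TW\ge n$. Let $\F$ be a prime field of size $\poly(n)$, chosen large enough that $|\F|>2n$ (so the integer inner product embeds unambiguously in $\F$) and that the Schwartz-Zippel soundness bound below is well under $1/3$.

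Embed $[W],[T]\subseteq\F$, and take the bivariate low-degree extensions $\tilde x,\tilde y:\F^2\to\F$ of degree $W-1$ in the first variable and $T-1$ in the second, via Lagrange interpolation on $[W]\times[T]$. Then
\begin{equation*}
  \langle x,y\rangle \;=\; \sum_{z\in[W]}\sum_{w\in[T]} \tilde x(z,w)\,\tilde y(z,w),
\end{equation*}
and the right-hand side uniquely determines $\langle x,y\rangle\in\{-n,\dots,n\}$ inside $\F$. Define the univariate proof polynomial
\begin{equation*}
  P^{*}(z)\;=\;\sum_{w\in[T]} \tilde x(z,w)\,\tilde y(z,w),
\end{equation*}
which has degree at most $2(W-1)$ in $z$.

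Merlin's proof is the list of coefficients of a polynomial $P$ of degree $\le 2(W-1)$ claimed to equal $P^{*}$; this costs $O(W\log|\F|)=O(W\log n)$ bits. Using the shared random string, Alice and Bob sample a uniform $r\in\F$. Alice sends the $T$ field elements $\{\tilde x(r,w):w\in[T]\}$, costing $O(T\log n)$ bits. Bob computes $S=\sum_{w\in[T]}\tilde x(r,w)\,\tilde y(r,w)$ and also, locally from the proof, $c=\sum_{z\in[W]} P(z)$; he accepts iff $S=P(r)$ and $c>0$. For completeness, on a yes-instance ($\langle x,y\rangle>\sqrt n$) the honest choice $P=P^{*}$ gives $c=\langle x,y\rangle>0$ and $S=P^{*}(r)=P(r)$ deterministically, so the verifiers accept with probability $1$. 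For soundness, on a no-instance ($\langle x,y\rangle<-\sqrt n$) and any $P$: if $P=P^{*}$ then $c=\langle x,y\rangle<0$ and the verifiers reject; otherwise $P\ne P^{*}$, both of degree $\le 2(W-1)$, and by Schwartz-Zippel $\Pr_r[P(r)=P^{*}(r)]\le 2(W-1)/|\F|$, so Bob's check $S=P(r)$ fails with probability $\ge 1-2(W-1)/|\F|$. Taking $|\F|=\Theta(n^{2})$ drives the soundness error well below $1/3$.

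There is no real obstacle: the only items that need attention are the degree bookkeeping for $P^{*}$ (the product $\tilde x\tilde y$ has degree $2(W-1)$ in $z$, unchanged by summing over $w\in[T]$) and the choice of $|\F|$ large enough to ensure both the injective embedding of $\{-n,\dots,n\}$ into $\F$ and a Schwartz-Zippel error well below the threshold. The resulting protocol has communication $O(T\log n)$ and proof length $O(W\log n)$ whenever $TW\ge n$, matching the theorem.
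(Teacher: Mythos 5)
Your proposal is correct and follows essentially the same route as the paper: both arithmetize the inner product over a $W\times T$ grid via bivariate low-degree extensions, have Merlin commit to the univariate polynomial $\sum_{w\in[T]}\tilde x(\cdot,w)\tilde y(\cdot,w)$ of degree at most $2(W-1)$, and verify it at one random field point (one player ships its $T$ evaluations along that line), with soundness from Schwartz--Zippel. The only differences are cosmetic: the paper phrases the quantity in terms of Hamming distance, uses a field of size $\Theta(n)$, and has Bob send $r$ rather than drawing it from the shared random string.
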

\begin{proof}
  Let $T,W \in \N$ such that $T \cdot W \ge n$. Assume for simplicity
  and without loss of generality that $T \cdot W = n$ exactly. Let $a
  \coloneqq (a_1, \ldots, a_n) \in \BM^n$ be the input of Alice, and
  $b \coloneqq (b_1, \ldots, b_n) \in \BM^n$ be the input of Bob. Let
  each player define a bivariate function that represents its input;
  more precisely, let Alice define $f_a:[W] \times [T] \to \BM$ by
  \begin{align*}
    f_a(x,y)=a_{(x-1)T +y},
  \end{align*}
  and similarly, let Bob define $f_b:[W] \times [T] \to \BM$ by
  \begin{align*}
    f_b(x,y)=b_{(x-1)T +y}.
  \end{align*}

  Fix a prime $q\in[6n,12n]$. Note that $f_a$ and $f_b$ have unique
  extensions $\tilde f_a:\F_q^2\to \F_q$ and $\tilde f_b:\F_q^2\to
  \F_q$ (respectively) as polynomials of degree $(W-1)$ in the first
  variable, and degree $(T-1)$ in the second variable. Next, define
  the polynomial $s:\F_q\to \F_q$ by
  \begin{align*}
    s(x)=\sum_{y\in[T]} \tilde f_a(x,y)\tilde f_b(x,y).
  \end{align*}
  Note that the degree of $s$ is at most $2(W-1)$. Denote the Hamming
  distance of $a$ and $b$ by $\mathsf{HD}(a,b)$. Then,
  \begin{align}
    \label{eq:HD}
    \mathsf{HD}(a,b) = \frac{n - \sum_{x\in[W]} s(x)}{2}.
  \end{align}
  Thus, it is sufficient for one of the players to know $s$ in order
  to compute the Hamming distance.  We define the following $\MA$
  communication complexity protocol:

  \begin{algorithm}
    \caption{$\MA$ Communication Complexity Protocol for $\GHD$}
    \small \medskip \noindent
    \begin{enumerate}
    \item Merlin sends Alice a message that consists of the
      coefficients of a polynomial $s':\F_q\to \F_q$ of degree at most
      $2(W-1)$, for which Merlin claims that $s'=s$.
    \item Bob uniformly picks $r \in\F_q$, and sends Alice a message
      that consists of $r$ and
      \begin{align*}
        \tilde f_b(r,1),\ldots,\tilde f_b(r,T).
      \end{align*}
    \item Alice computes $s(r)=\sum_{y\in[T]} \tilde f_a(r,y)\tilde
      f_b(r,y)$ and $s'(r)$. If $s(r)=s'(r),$ Alice computes
      $\mathsf{HD}(a,b) = \frac{n - \sum_{x\in[W]} s'(x)}{2}$ and
      returns the result. Otherwise, Alice rejects the proof and
      returns $\bot$.
    \end{enumerate}
  \end{algorithm}

  Note that Merlin sends the coefficients of a polynomial of degree at
  most $2(W-1)$ over a finite field of cardinality $O(n)$. Hence the
  size of the proof is $O(W \log n)$. In addition, note that the
  entire communication between Alice and Bob consists of sending the
  element $r$ and the $T$ evaluations of $\tilde f_b$ (in step 2 of
  the algorithm). Hence the total communication complexity is $O(T
  \log n)$.

  If Merlin is honest, then Alice can directly compute
  $\mathsf{HD}(a,b)$ with probability $1$, as according to
  (\ref{eq:HD}) the Hamming distance of $a$ and $b$ can be inferred
  from $s$. Otherwise, if $s' \neq s$ then by the Schwartz-Zippel
  Lemma
  \begin{align*}
    \Pr [s(r)=s'(r)] \leq \frac{2(W-1)}{q}\leq\frac{1}{3},
  \end{align*}
  (where the probability is taken over the random selection of
  $r\in\F_q$).  Thus the test fails with probability at least $2/3$.
\end{proof}

\section{The AM Streaming Complexity of \emph{Distinct Elements}}
\label{sec:distinctelements}
In this section we show an application of the canonical $\AM$
streaming algorithm for the \emph{Distinct Elements} problem.  In the
regular data stream model (without any probabilistic proof system), it
is well known (cf. \cite{M05}) that the space complexity of the
\emph{Distinct Elements} problem is lower bounded by the size of the
alphabet of the data stream (for sufficiently long data streams). In
contrast, using the canonical $\AM$ streaming algorithm we show that
by allowing $\AM$ proofs, we can obtain a tradeoff between the space
complexity and the size of the proof.

Furthermore, we then rely on our lower bound on the $\MA$
communication complexity of the $\GHD$ problem, in order to show a
matching lower bound on the $\widehat{\MA}$ streaming complexity of
\emph{Distinct Elements}.

\subsection{Upper Bound}
We show that for every $s,w \in \N$ such that $s \cdot w \ge n$ (where
$n$ is the size of the alphabet) there exists an $\AM$ streaming
algorithm for the \emph{Distinct Elements} problem that uses a proof
of size $\tilde O (w)$ and a space complexity $\tilde O (s)$. For
example, by fixing $w = n$, we have an $\AM$ streaming algorithm for
the \emph{Distinct Elements} problem that uses only a polylogarithmic
(in the size of the alphabet and the length of the stream) number of
bits of space.

Formally, we show:
\begin{theorem}
  \label{thm:de}
  For every $s,w\in\N$ such that $s \cdot w \ge n$, there exists an
  explicit $\AM_{0,1/3} \big(s \cdot \polylog(m,n) ,\: w \cdot
  \polylog(m,n)\big)$ streaming algorithm for the \emph{Distinct
    Elements} problem, given a data stream $\sigma=(a_1,\ldots,a_m)$
  with alphabet $[n]$.
\end{theorem}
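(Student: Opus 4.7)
The plan is to derive Theorem~\ref{thm:de} as a direct instantiation of the canonical $\AM$ streaming algorithm from Theorem~\ref{thm:main}. The key observation is that an element $j \in [n]$ appears in the stream $\sigma = (a_1, \ldots, a_m)$ if and only if at least one of the indicators $\chi_1(j), \ldots, \chi_m(j)$ equals $1$; hence the number of distinct elements can be expressed as a sum of disjunctions over the element indicators. This lets us realize $F_0$ in exactly the form that Theorem~\ref{thm:main} requires.

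Concretely, I would take $k = 1$, define the single clause $C_1(x_1, \ldots, x_m) = x_1 \vee x_2 \vee \ldots \vee x_m$ over $m$ variables, and set $\psi : \B \to \Z$ to be the identity map $\psi(b) = b$. Then for every $j \in [n]$,
\begin{align*}
  \psi\bigl(C_1 \circ \chi(j)\bigr) \;=\; \chi_1(j) \vee \chi_2(j) \vee \ldots \vee \chi_m(j) \;=\; \mathbb{1}\bigl[\exists i \in [m]:\; a_i = j\bigr],
\end{align*}
so that
\begin{align*}
  \sum_{j=1}^n \psi\bigl(C_1 \circ \chi(j)\bigr) \;=\; F_0(\sigma) \;=\; \mathcal{P}(\sigma).
\end{align*}
This matches the hypothesis of Theorem~\ref{thm:main} with $\epsilon = 0$ (exact equality, not merely approximation). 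Both $\psi$ and $C_1$ have trivial descriptions that the verifier knows for free, and $\psi(b) < 2 \le \poly(m,n)$, so the bound $B \le \poly(m,n)$ holds.

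Given $s, w \in \N$ with $s \cdot w \ge n$, I would simply invoke Theorem~\ref{thm:main} with the above $\mathcal{P}$, $\psi$, $C_1$, error parameter $\delta = 1/3$, and approximation parameter $\epsilon = 0$. Because $k = 1$, the resulting $\AM_{0, 1/3}(S, W)$ streaming algorithm has
\begin{align*}
  S \;=\; O\bigl(s \cdot \polylog(m, n)\bigr), \qquad W \;=\; O\bigl(w \cdot \polylog(m, n)\bigr),
\end{align*}
which is exactly the statement of Theorem~\ref{thm:de}. There is essentially no obstacle here beyond verifying that the hypotheses of Theorem~\ref{thm:main} are met; the only minor point worth noting is that Theorem~\ref{thm:main} is stated for $0 \le \epsilon < 1/2$, and the exact case $\epsilon = 0$ is included, so the algorithm computes $F_0(\sigma)$ exactly (not merely up to a multiplicative factor), which is what the \emph{Distinct Elements} problem demands.
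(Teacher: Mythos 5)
Your proposal is correct and is essentially the paper's own proof: the paper likewise writes $F_0(\sigma) = \sum_{j=1}^n \bigl(\chi_1(j) \vee \ldots \vee \chi_m(j)\bigr)$, i.e.\ takes a single disjunction clause (so $k=1$ with $\psi$ the identity), sets $\epsilon = 0$ and $\delta = 1/3$, and invokes Theorem~\ref{thm:main} to get $S = O\bigl(s \cdot \polylog(m,n)\bigr)$ and $W = O\bigl(w \cdot \polylog(m,n)\bigr)$. Your explicit checks of the hypotheses (the bound $B$ and the verifier's knowledge of $\psi$ and $C_1$) are fine and match the paper's intent.
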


The idea behind the proof of Theorem~\ref{thm:de} is simply noting
that we can indicate whether an element $j$ appears in the stream, by
the disjunction of the element indicators of $j\in[n]$ in all of the
positions of the stream (i.e., $\chi_1(j), \ldots, \chi_m(j)$). Then
we can represent the number of distinct elements as a sum of
disjunctions, and use the canonical $\AM$ streaming algorithm in order
to solve the \emph{Distinct Elements} problem. Formally,

\begin{proof}
  Recall that the \emph{Distinct Elements} problem is the data stream
  problem of computing (exactly) the following function:
  \begin{align*}
    F_0(\sigma) = \left|\left\{ i\in[n] \; : \; \exists j\in[m] \;\;
        a_{j}=i \right\}\right|.
  \end{align*}
  Observe that for every data stream we can write $F_0(\sigma)$ as
  \begin{align*}
    \sum_{j=1}^n \big( \chi_1(j) \vee \chi_2(j) \vee \ldots \vee
    \chi_m(j) \big).
  \end{align*}

  Let $\sigma=(a_1,\ldots,a_m)$ be a data stream with alphabet $[n]$.
  Let $s,w\in\N$ such that $s \cdot w \ge n$, let $\epsilon = 0$, and
  let $\delta = 1/3$.  By Theorem~\ref{thm:main} we have an explicit
  $\AM_{\epsilon,\delta}(S,W)$-streaming algorithm for computing
  $F_0(\sigma)$, where $S = O \big( s \cdot \polylog(m, n) \big)$ and
  $W = O \big( w \cdot \polylog (m, n) \big)$.
\end{proof}

\subsection{Lower bound}
In the rest of this section we consider the $\widehat{\MA}$ model. As
we mentioned in Section~\ref{sec:pps} the $\widehat{\MA}$ model,
wherein the verifier has a free access to the proof, is stronger than
the $\MA$ model, wherein the proof is being streamed. Hence the lower
bound we prove holds for both models.

As implicitly shown in \cite{IW03}, the communication complexity
problem of $\GHD$ reduces to the data stream problem of \emph{Distinct
  Elements}. We note that the foregoing reduction can be adapted in
order to reduce the $\MA$ communication complexity problem of $\GHD$
to the $\widehat{\MA}$ problem of approximating the number of distinct
elements in a stream within a multiplicative factor of $1 \pm
1/\sqrt{n}$. Together with our lower bound on the $\MA$ communication
complexity of $\GHD$, this implies the following:

\begin{theorem}
  \label{thm:de_lb}
  Let $\delta < \frac{1}{4}$. For every
  $\widehat{\MA}_{\frac{1}{\sqrt{n}}, \delta} (S,W)$ streaming
  algorithm for approximating the number of distinct elements in a
  data stream $\sigma=(a_{1},\ldots, a_{m})$ (over alphabet $[n]$) we
  have $S\cdot W=\Omega(n)$, hence $S+W=\Omega(\sqrt{n})$.
\end{theorem}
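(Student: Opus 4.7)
The plan is to reduce the $\MA$ communication complexity of $\GHD$ to the $\widehat{\MA}$ streaming complexity of approximating $F_0$, and then invoke Theorem~\ref{thm:ghd_lb}. The underlying reduction is the standard one implicit in \cite{IW03}: given $x, y \in \BM^{n'}$, Alice forms the stream $\sigma_x = ((1, x_1), \ldots, (n', x_{n'}))$ over the universe $[n'] \times \BM$, Bob forms $\sigma_y$ analogously, and the concatenated stream $\sigma = \sigma_x \circ \sigma_y$ satisfies $F_0(\sigma) = n' + \mathsf{HD}(x, y)$, since a coordinate $i \in [n']$ with $x_i = y_i$ contributes a single distinct element while a coordinate with $x_i \neq y_i$ contributes two.

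Given an $\widehat{\MA}_{1/\sqrt n, \delta}(S, W)$ streaming algorithm $\mathcal A$ for $F_0$ on alphabet $[n]$, with $\delta < 1/4$, I set $n' \coloneqq \lfloor n/C \rfloor$ for a sufficiently large absolute constant $C > 16$, and simulate $\mathcal A$ in the communication setting for $\GHD_{n'}$: Merlin, who sees $x$ and $y$, sends Alice and Bob a proof $w$ of length at most $W$; Alice runs $\mathcal A$ on $\sigma_x$ using her free oracle access to $w$, transmits the $S$-bit internal state to Bob, who continues the run of $\mathcal A$ on $\sigma_y$ using $w$ and obtains an output $\tilde F_0$ (or $\bot$). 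Bob outputs $1$ iff $\tilde F_0 \ne \bot$ and $\tilde F_0 < 3 n'/2$. The choice $C > 16$ ensures that the additive error $|\tilde F_0 - F_0| \le F_0/\sqrt n \le 2 n'/\sqrt n$ is strictly smaller than $\sqrt{n'}/2$, half of the $\sqrt{n'}$-gap in $F_0$ between the two cases of $\GHD_{n'}$; hence $\tilde F_0$ correctly classifies $\GHD_{n'}$ whenever the algorithm does not output $\bot$. Completeness and soundness transfer from $\mathcal A$: an honest Merlin yields a non-$\bot$ output that correctly classifies $\GHD$ with probability at least $1 - \delta$, while any dishonest proof yields $\bot$ with probability at least $1 - \delta$, which Bob interprets consistently as the answer $0$.

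Applying Theorem~\ref{thm:ghd_lb} to the resulting $\MA_\delta(S, W)$ protocol for $\GHD_{n'}$ yields $S \cdot W = \Omega(n') = \Omega(n)$, and $S + W = \Omega(\sqrt n)$ follows by AM-GM. The only point that needs care in the writeup is the calibration between $n$ (the DE alphabet) and $n'$ (the GHD bit length) so that the $(1 \pm 1/\sqrt n)$-approximation resolves the $\sqrt{n'}$ gap; beyond this the argument is a direct instance of the standard streaming-to-communication reduction, with the observation that free access to $w$ in the $\widehat{\MA}$ streaming model is exactly what lets both Alice and Bob use the shared proof in the $\MA$ communication model.
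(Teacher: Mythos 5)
Your proposal is correct and follows essentially the same route as the paper: the \cite{IW03}-style encoding of $x,y$ as streams with $F_0(\sigma_x\circ\sigma_y)=n'+\mathsf{HD}(x,y)$, the simulation of the $\widehat{\MA}$ streaming algorithm by a one-way $\MA$ communication protocol in which Alice ships her $S$-bit state and both players use the shared proof, and the final invocation of Theorem~\ref{thm:ghd_lb}. The only difference is cosmetic: you rescale the $\GHD$ instance to $n'=\lfloor n/C\rfloor$ so that the $(1\pm 1/\sqrt n)$ approximation directly resolves the $\sqrt{n'}$ gap via an explicit threshold, whereas the paper keeps the same $n$, obtains a protocol for $\GHD_{n,2\sqrt n,2\sqrt n}$, and then appeals to the parameter-translation reductions of \cite{CR11}.
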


\begin{proof}
  Let Alice hold a string $x\in\BM^n$ and Bob hold $y\in\BM^n$. Alice
  can convert her string $x=(x_1,\ldots,x_n)$ to a data stream over
  the alphabet $\Sigma = \big\{ (i,b) \;|\; i\in[n],\;b\in\BM \big\}$
  in the following manner:
  \begin{align*}
    \sigma_A=\big((1,x_1),(2,x_2),\ldots,(n,x_n)\big).
  \end{align*}
  Similarly, Bob can convert $y=(y_1,\ldots,y_n)$ to
  \begin{align*}
    \sigma_B=\big((1,y_1),(2,y_2),\ldots,(n,y_n)\big).
  \end{align*}
  Observe that all of the elements in $\sigma_A$ are distinct, and
  that all of the elements in $\sigma_B$ are also distinct. In
  addition, note that the only way in which an element can appear
  twice in the concatenation of the streams is if $x_i=y_i$ for some
  $i\in[n]$. In fact, if we denote the number of distinct elements in
  $\sigma_A\circ\sigma_B$ by $d$, and denote the Hamming distance of
  $x$ and $y$ by $\mathsf{HD}(x,y)$, then we have the following
  relation:
  \begin{align}
    \label{eq:dist}
    d = n + \mathsf{HD}(x,y).
  \end{align}

  Alice and Bob can simulate running an $\widehat{\MA}$ streaming
  algorithm on the concatenation of their inputs by using a one-way
  $\MA$ communication complexity protocol, such that the number of the
  bits that are being communicated during the execution of the
  protocol is exactly the same as the number of bits of space that are
  used by the simulated $\widehat{\MA}$ streaming algorithm. Details
  follow.

  Say we have an $\widehat{\MA}_{\frac{1}{\sqrt{n}}, \delta} (S,W)$
  streaming algorithm $\mathcal{A}$ for approximating the number of
  distinct elements in $\sigma_A\circ\sigma_B$. Alice can run
  $\mathcal{A}$ on $\sigma_A$, using a proof $w$ of size $W$. After
  the algorithm finished processing the last element of $\sigma_A$,
  Alice sends the current state of her memory (which consists of at
  most $S$ bits) to Bob. Next, Bob sets his memory to the state that
  Alice had sent, uses the proof $w$, and completes the run of
  $\mathcal{A}$ over $\sigma_B$. Note that the total communication
  during the execution of the aforementioned protocol is at most $S$
  bits, as the data stream algorithm uses at most $S$ bits of space
  during its execution.

  As a conclusion, if there exists such $\mathcal{A}$ then by the
  reduction above there exists an $\MA$ communication protocol that
  outputs a $1 \pm 1/\sqrt{n}$ multiplicative approximation of $d$. By
  (\ref{eq:dist}) we can compute $\mathsf{\widetilde{HD}}(x,y)$, such
  that
  \begin{align*}
    \mathsf{HD}(x,y) - \sqrt{n} - \frac{\mathsf{HD}(x,y)}{\sqrt n} <
    \mathsf{\widetilde{HD}}(x,y) < \mathsf{HD}(x,y) + \sqrt{n} +
    \frac{\mathsf{HD}(x,y)}{\sqrt n},
  \end{align*}
  or
  \begin{align*}
    \mathsf{HD}(x,y) - 2\sqrt{n} < \mathsf{\widetilde{HD}}(x,y) <
    \mathsf{HD}(x,y) + 2\sqrt{n}.
  \end{align*}
  Thus we can solve $\GHD_{n, 2\sqrt n, 2\sqrt n}$ while communicating
  at most $O(S)$ bits and using a proof of at most $O(W)$ bits. Hence,
  using the toolkit of reductions provided in \cite{CR11} (see
  Section~\ref{sec:ghd_lb}) we can solve $\GHD$, while communicating
  at most $O(S)$ bits and using a proof of at most $O(W)$ bits. Thus,
  by Theorem~\ref{thm:ghd_lb} we have $S \cdot W = \Omega(n)$, hence
  $S+W = \Omega(\sqrt n)$.
\end{proof}

Note that in particular, Theorem~\ref{thm:de_lb} implies a lower bound
(with the same parameters) on the $\widehat{\MA}$ streaming complexity
of computing the \emph{exact} number of distinct elements in a stream.

Last, we also note that by a straightforward adaptation of the
reduction from the communication complexity problem of $\GHD$ to the
data stream problem of \emph{Empirical Entropy} (see \cite{CCM07}),
our $\MA$ lower bound on $\GHD$ also implies an $\widehat{\MA}$ lower
bound on the \emph{Empirical Entropy} problem.

\bibliographystyle{alpha} \bibliography{amstreams}
\end{document}